\newcommand{\mcomment}[2]{{\color{blue}(#1)}\footnote{#1: #2}}
\newcommand{\fhn}[1]{\mcomment{FH}{N}} \newcommand{\fhg}[1]{\mcomment{FH}{G}}
\newtheorem*{maintheorem}{Main Result}
\newtheorem*{maintechnicaltheorem}{Main Technical Result}
\newcommand{\fo}{\ensuremath{\mathrm{FO}}\xspace}
\newcommand{\fotwo}{\ensuremath{\mathrm{FO}^2}\xspace}
\newcommand{\emsotwo}{\ensuremath{\mathrm{EMSO}^2}\xspace}
\newcommand{\esotwo}{\ensuremath{\mathrm{ESO}^2}\xspace}
\newcommand {\rat}      {\mathbb{Q}}
\newcommand{\suc}{\ensuremath{{S}}}
\newcommand {\calA}      {{\cal A}\xspace}
\newcommand {\calB}      {{\cal B}\xspace}
\newcommand {\calC}      {{\cal C}\xspace}
\newcommand {\calK}      {{\cal K}\xspace}
\newcommand {\calP}      {{\cal P}\xspace}
\newcommand {\calL}      {{\cal L}\xspace}
\newcommand {\frakA} {\ensuremath{\mathfrak{A}}}
\newcommand {\frakB} {\ensuremath{\mathfrak{B}}}
\newcommand  {\N}   {\ensuremath{\mathbb{N}}}
\newcommand  {\myclass} [1]  {\ensuremath{\mbox{\sc #1}}\xspace}
\newcommand     {\NTWOEXPTIME}  {\myclass{2-NExpTime}}
\newcommand     {\TWONEXPTIME}  {\NTWOEXPTIME}
\newcommand     {\NEXPTIME}  {\myclass{NExpTime}}
\newcommand     {\EXPSPACE}  {\myclass{ExpSpace}}
\newcommand     {\FO}   {\myclass{FO}}
\newcommand{\df}{\ensuremath{\mathrel{\smash{\stackrel{\scriptscriptstyle{
    \text{def}}}{=}}}} \;}
  \newtheorem{theorem}{Theorem}[section]
   \newtheorem{lemma}[theorem]{Lemma}
      \theoremstyle{definition}
    \newtheorem{example}[theorem]{Example}
    \newtheorem*{example*}{Example}
     \newenvironment{proofsketch}{\noindent\emph{Proof sketch.}\enspace}{\qed \\}
\newcommand{\reptheoremtitlefont}[1]{\textbf{#1}}
\newcommand{\reptheoremcontentfont}{\itshape}
\newcommand{\theoremcont}[3]{   \def\Type{#1}   \def\Number{#2}   \def\Label{#3}  \@ifmtarg{#3}{     \reptheoremtitlefont{\Type\ \Number\ (R).}\reptheoremcontentfont \hspace{2mm}   }{    \reptheoremtitlefont{\Type\ \Number\ (R)}\ \reptheoremcontentfont(\Label).\hspace{-2mm}  }}
\newcommand{\highlightedAreaFillColor}{blue!20}
\newcommand{\mnodeFillColor}{black!40}
\newcommand{\mnodeDrawColor}{black!60}
\tikzstyle{highlightedArea}=[
\tikzstyle{mnode}=[
\newcommand{\pictureTwoSuccessorsFarAwayA}{
  \begin{tikzpicture}[
    xscale=0.35,
    yscale=0.35
  ]
        \draw [->, line width=2pt, black!80] (0,0) -- (8,0) node[black, below] {$\suc_1$};
    \draw [->, line width=2pt, black!80] (0,0) -- (0,8) node[black, left, sloped] {$\suc_2, <_2$};

        \foreach \x in {1,2,...,7}
    \draw[dotted, black] (\x,0) -- (\x,8) ;

    \foreach \y in {1,2,...,7}
    \draw[dotted, black] (0,\y) -- (8,\y);

        \node (tmp) at (4, 4)[mnode, label={below left:$a$}] {};
    \node (tmp) at (6, 7)[mnode, label={above right:$b$}] {};

    \begin{pgfonlayer}{background}
      \fill [highlightedArea] (0,5.7) rectangle (2.3,8);
      \fill [highlightedArea] (5.7,5.7) rectangle (8,8);
    \end{pgfonlayer}
  \end{tikzpicture}
}
\newcommand{\pictureTwoSuccessorsFarAwayB}{
  \begin{tikzpicture}[
    xscale=0.35,
    yscale=0.35
  ]
        \draw [->, line width=2pt, black!80] (0,0) -- (8,0) node[black, below] {$\suc_1$};
    \draw [->, line width=2pt, black!80] (0,0) -- (0,8) node[black, left, sloped] {$\suc_2, <_2$};

        \foreach \x in {1,2,...,7}
    \draw[dotted, black] (\x,0) -- (\x,8) ;

    \foreach \y in {1,2,...,7}
    \draw[dotted, black] (0,\y) -- (8,\y);

        \node (tmp) at (4, 4)[mnode, label={below left:$a$}] {};
    \node (tmp) at (7, 1)[mnode, label={above right:$b$}] {};

    \begin{pgfonlayer}{background}
      \fill [highlightedArea] (0,0) rectangle (2.3,2.3);
      \fill [highlightedArea] (5.7,0) rectangle (8,2.3);
    \end{pgfonlayer}
  \end{tikzpicture}
}
\newcommand{\pictureExamplePointSetDataWord}{
  \begin{tikzpicture}[
    xscale=0.4,
    yscale=0.4
  ]
        \draw [->, line width=2pt, black!80] (0,0) -- (8,0) node[black, below] {$\suc_1, <_1$};
    \draw [->, line width=2pt, black!80] (0,0) -- (0,8) node[black, left, sloped] {$\suc_2, <_2$};

        \foreach \x in {1,2,...,7}
    \draw[dotted, black] (\x,0) -- (\x,8) ;

    \foreach \y in {1,2,...,7}
    \draw[dotted, black] (0,\y) -- (8,\y);

        \node (tmp) at (1, 5)[mnode, label={below:$\tau$}] {};
    \node (tmp) at (2, 6)[mnode, label={below:$\delta$}] {};
    \node (tmp) at (3, 3)[mnode, label={below:$\sigma$}] {};
    \node (tmp) at (4, 7)[mnode, label={below:$\delta$}] {};
    \node (tmp) at (5, 1)[mnode, label={below:$\tau$}] {};
    \node (tmp) at (6, 4)[mnode, label={below:$\tau$}] {};
    \node (tmp) at (7, 2)[mnode, label={below:$\delta$}] {};
    
   \end{tikzpicture}
}
\begin{document}

\title{Order-Invariance of Two-Variable Logic is Decidable}
\author{
Thomas Zeume\\
TU Dortmund University\\
  \texttt{thomas.zeume@cs.tu-dortmund.de}
\and 
Frederik Harwath \\
Goethe Universität Frankfurt am Main \\ 
\texttt{harwath@cs.uni-frankfurt.de}}

\maketitle

\begin{abstract}
It is shown that order-invariance of two-variable first-logic is decidable in the finite. This is an immediate
consequence of a decision procedure obtained for the finite satisfiability problem for existential second-order logic with two first-order variables ($\esotwo$) on structures with two linear orders and one induced successor. We also show that finite satisfiability is decidable on structures with two successors and one induced linear order. In both cases, so far only decidability for monadic $\esotwo$ has been known. In addition, the finite satisfiability problem for $\esotwo$ on structures with one linear order and its induced successor relation is shown to be decidable in non-deterministic exponential time.  

\end{abstract}

\section{Introduction}\label{section:introduction}

Order-invariance plays a crucial role in several areas of finite model theory. In descriptive complexity theory, for example, various well-known results establish that a logic captures a complexity class on structures that are equipped with a linear order. Usually, in such results, the particular order on a given structure is not important. That is, the formulas constructed in the proofs are \emph{order-invariant}, i.e. they do not distinguish different linear orders on a given structure (see Section \ref{section:invariance} for a precise definition).

First-order logic ($\fo$) is a logic of great importance in model theory and, consequently, order-invariant $\FO$-sentences have been studied in the literature before, see e.g. \cite{GroheSchwentick2000,BenediktSegoufin2009,EickmeyerElberfeldHarwath2014}, and the survey~\cite{Schweikardt13}. 

It is well-known that the question whether an $\FO$-sentence is order-invariant is undecidable. Two possible remedies are to restrict either the class of structures or the logic. For the former case, it is known that order-invariance remains undecidable for colored directed paths  \cite{BenediktSegoufin2009} and colored star graphs \cite{EickmeyerElberfeldHarwath2014}.

In this article we study the decidability of invariance for the \emph{two-variable fragment of first-order logic} (\emph{two-variable logic} or $\fotwo$ for short). This fragment is reasonably expressive and yet its satisfiability and finite satisfiability problems are decidable \cite{Scott1962,MortimerOn75}. As a query language, it has a strong connection to the XML query language XPATH on trees~\cite{Marx2005}. Furthermore, many modal logics can be translated to $\fotwo$ and inherit its good algorithmic properties. Those applications as well as the inability of $\fotwo$ to express transitivity of a relation has led to an exhaustive study of the complexity of the (finite) satisfiability problem of the logic where some relation symbols are interpreted by transitive relations, equivalence relations, linear orders, successor relations and preorders (see e.g. \cite{Otto01, KieronskiT09, Manuel10, BojanczykDMSS11, KieronskiO12, SchwentickZ12, ManuelZ13, SzwastT13, KieronskiMPT14}).

We establish the decidability of order-invariance for two-variable logic. 

\begin{maintheorem}
    Order-invariance of $\fotwo$ is decidable.
\end{maintheorem}
The key to our main result is a simple observation which relates the problem to a satisfiability problem on ordered structures. A $\fotwo$-sentence $\varphi$ is \emph{not} order-invariant if and only if there are a structure and two linear orders on its domain which are distinguished by~$\varphi$. More precisely, $\varphi$ is \emph{not} order-invariant if and only if there are two linear orders $<_{1}$ and $<_{2}$ on a finite set $D$ and a tuple of relations $\bar R$ on $D$ that interprets all relation symbols in~$\varphi$ except for $<$ such that $(D,\bar R,<_{1}) \models \varphi$ and $(D,\bar R, <_{2}) \not\models \varphi$.
The latter statement can be seen as a satisfiability question for a sentence of the \emph{two-variable fragment of existential second-order logic} ($\esotwo$) on finite ordered $(<_{1},<_{2})$-structures.
Here, a \mbox{$(<_{1},<_{2})$}-structure is \emph{ordered} if $<_{1}, <_{2}$ are interpreted by linear orders.
Thus, if the finite satisfiability problem for $\esotwo$ is decidable on ordered $(<_{1},<_{2})$-structures, then so is order-invariance.

We take a slightly more comprehensive approach and study the finite satisfiability problem for $\esotwo$ on structures that are equipped with linear orders and successor relations; henceforth called \emph{ordered structures}. 

Some prior work on decidability of $\esotwo$ and its monadic fragment $\emsotwo$ (where only quantification of unary relations is allowed) 
on ordered structures has been done under the guise of the finite satisfiability problem for $\fotwo$. An almost complete characterization of the classes of finite ordered structures for which $\emsotwo$ is decidable was obtained in a sequence of articles \cite{Manuel10, SchwentickZ10, Kieronski11, SchwentickZ12, ManuelZ13}. Only the case of finite ordered structures with at least three successor relations remains open.
For $\esotwo$ it is only known that it is decidable on the class 
of finite ordered $(<,\suc)$-structures \cite{GradelR99,Otto01, CharatonikW15} and $\NEXPTIME$-complete for $(\suc_1, \suc_2)$-structures \cite{CharatonikW13}. We note that the latter result combined with the observation from above establishes the decidability of successor-invariance. Here and in general, if both a symbol $<$ (or $<_i$) and $\suc$ (or~$\suc_i$) occur in the signature of a structure, we assume that $\suc$ (or~$\suc_i$) is interpreted by the induced successor of $<$ (or $<_i$).

\begin{figure*}[th!]
\begin{center}
 \renewcommand{\arraystretch}{1.1}

\scalebox{.8}{\begin{tabular}{|p{4.0cm}|p{5.5cm}|p{5.3cm}|}
\hline
 {\bf Class of ordered structures} &  $\emsotwo$ & $\esotwo$ \\
\hline \hline

\multicolumn{3}{|c|}{\textbf{One linear order/successor}}\\
\hline
 $\suc$-structures & \NEXPTIME-complete \cite{EtessamiVW02} & \NEXPTIME-complete $\bigstar$\\
 $<$-structures & \NEXPTIME-complete \cite{EtessamiVW02} & \NEXPTIME-complete \cite{Otto01} \\
 $(\suc, <)$-structures & \NEXPTIME-complete \cite{EtessamiVW02} & \NEXPTIME-complete $\bigstar$ \\

\hline
\multicolumn{3}{|c|}{\textbf{Two linear orders/successors}}\\
\hline
  $(\suc_1, \suc_2)$-structures & \NEXPTIME-complete \cite{CharatonikW13} & \NEXPTIME-complete \cite{CharatonikW13} \\
  $(<_1,<_2)$-structures & in \EXPSPACE \cite{SchwentickZ10} & in \TWONEXPTIME $\bigstar$ \\
  $(\suc_1, \suc_2, <_2)$-structures & \mbox{in  {\sc Multicounter-Emptiness}} \cite{ManuelZ13} & \mbox{in  {\sc Multicounter-Emptiness}} $\bigstar$\\
  $(\suc_1, <_1, <_2)$-structures &  in \EXPSPACE \cite{SchwentickZ12} &  in \TWONEXPTIME $\bigstar$\\
  $(\suc_1, <_1, \suc_2, <_2)$-structures & {Undecidable} \cite{Manuel10} & {Undecidable} \cite{Manuel10} \\

\hline
\multicolumn{3}{|c|}{\textbf{Many linear orders/successors}}\\
\hline

  $(<_1, <_2, <_3)$-structures & Undecidable \cite{Kieronski11} & Undecidable \cite{Kieronski11} \\
  $(\suc_1, \suc_2, \suc_3)$-structures & ? & ? \\
  $(\suc_1, \suc_2, \suc_3, \ldots)$-structures & ? & ? \\
\hline
\end{tabular}
}
\caption{Summary of results on satisfiability of $\esotwo$ and $\emsotwo$ on finite ordered structures. Results contained in this article are marked by $\bigstar$.} \label{fig:summary}
\vspace{0cm}
\end{center}
\end{figure*}
Our main technical contribution implies the decidability of order-invariance for $\fotwo$. 
\begin{maintechnicaltheorem}
  The satisfiability problem for $\esotwo$ is decidable on the class of finite ordered $(\suc_1,<_{1},<_{2})$-structures and on the class of finite ordered $(\suc_{1},\suc_{2},<_{2})$-structures.
\end{maintechnicaltheorem}

The first part  in particular closes the gap between the decidability
of $\esotwo$ on ordered $<$-structures \cite{Otto01} and the undecidability
of $\emsotwo$ on ordered $(<_{1},<_{2},<_{3})$-structures~\cite{Kieronski11}.
More precisely we show that the decision problem can be solved in $\TWONEXPTIME$ on $(\suc_1, <_{1},<_{2})$-structures and as fast as the emptiness problem for multicounter automata on $(\suc_{1},\suc_{2},<_{2})$-structures. We conjecture that there is an $\EXPSPACE$-algorithm for the former problem.
For proving our results we generalize techniques used in  \cite{Manuel10}, \cite{SchwentickZ10} and \cite{ManuelZ13}. In the course of this, a technique developed by Otto \cite{Otto01} for $\esotwo$ on ordered $<$-structures turns out to be very valuable also for the more general structures considered here. We emphasize that while the basic techniques used for proving the results in this article are inspired by previous work, they are employed in a technically more demanding context. 

As an introduction to the methods used for proving the technical main result, we show that $\esotwo$ is $\NEXPTIME$-complete on finite $(<, \suc)$-structures.
This is also of some independent interest. Previously, only the satisfiability problem for $\emsotwo$ was known to be $\NEXPTIME$-complete on this class of structures. The extension of $\esotwo$ with counting quantifiers was recently shown to be decidable on such structures,   but only with very high complexity~\cite{CharatonikW15}.

\subsection*{Discussion}
 
Our results on satisfiability of $\esotwo$ on finite ordered  structures also help to make known results more robust, and therefore less vulnerable for confusion. 

In work on two-variable first-logic originating from applications in verification and database theory often only unary relation symbols are allowed in formulas besides the linear orders and successors (see e.g. \cite{EtessamiVW02,BojanczykDMSS11}). In most articles this restriction is stated clearly, or, at least the intended meaning is clear from the context. Yet abbreviations like $\fotwo(<)$ do not reflect this restriction. The arising ambiguity can lead to confusions for non-experts and readers only skimming an article. We definitely have been confused a couple of times, and we seem not be alone.

The second-order perspective taken in this article resolves this confusion. Furthermore, our results obviate the necessity to distinguish the two variants in the context of decidability of the finite satisfiability problem to some extent.

\subsection*{Outline} 
We introduce our notation and basic tool set in Section~\ref{section:preliminaries}. 
In Section~\ref{section:warm-up} we present two different ideas for deciding $\esotwo$ on finite $(<,\suc)$-structures. These ideas are generalized to finite ordered $(\suc_1, <_{1},<_{2})$-structures and finite ordered $(\suc_1, \suc_{2},<_{2})$-structures in Sections~\ref{section:twoorders} and~\ref{section:twosuccessors}. We discuss order-invariance in Section~\ref{section:invariance}. We conclude in Section~\ref{section:conclusion}.

 \section{Preliminaries}\label{section:preliminaries}

In this section we introduce ordered structures and two-variable logic.

\subsection{Ordered Structures}
\label{sec:ordered-structures}
In this article we consider logical formulas that are interpreted over \emph{ordered} structures, i.e.~relational structures where some relation symbols are interpreted by linear orders and their corresponding successor relations. We assume that all structures are \emph{finite} structures with a \emph{non-empty domain}.

A {\em linear order} $<$ is a transitive, total and antisymmetric
relation, that is, $a  <  b$ and $b  <  c$ implies $a < c$ ; $a <  b $ or  $b  <  a$ holds; and not both $a  <  b$ and $b  <  a$ are satisfied at the same time  for all elements $a, b$ and $c$. The  \textit{induced successor relation} $\suc$ of a linear order $ < $ contains a tuple $(a, b)$ if $a <  b$ and there is no element $c$ such that $a <  c  <  b$.

  Let $O\subseteq \{<_1, <_2, \ldots, \} \cup \{\suc_1,\suc_2,\ldots, \}$ where the $<_{i}$ and $\suc_{i}$ are binary relation symbols.
  An $O$-structure $\frakA$ is \emph{ordered} if each relation  $<_{i}^{\frakA}$ is
  a linear order and each relation $\suc_{i}^{\frakA}$ is the induced successor relation of a linear order. Furthermore, if $\suc_i\in O$ and $<_{i} \in O$, then $\suc_{i}^{\frakA}$ is the induced successor relation of $<_{i}^{\frakA}$. For convenience we often identify relation symbols with their respective interpretations. We use the symbols $\suc$ and $<$ if an ordered structure only has one linear order. 
  
 We say that two distinct elements $a$ and $b$ of $\frakA$ are \emph{$<_i$-close} if $\suc_i(a,b)$ or $\suc_i(b,a)$. Otherwise, $a$ and $b$ are \emph{$<_i$-remote}. We say that $a$ and $b$ are \emph{close} if they are $<_i$-close for some $\suc_i \in O$. Otherwise, $a$ and $b$ are \emph{remote}. If the signature $O$ contains $\suc_i$ but not $<_i$ we also say $\suc_i$-close and so on.

 For an ordered structure $\frakA$ with two underlying linear orders $<_1$ and $<_2$ it will often be convenient to think of $\frakA$ as a point set in the two-dimensional plane where $<_1$ orders points along the $x$-axis and $<_2$ orders points along the $y$-axis (see Figure \ref{figure:example:PointSetDataWord}). Following this conception, we say that an element $a$ is to the \emph{left} or \emph{right} of an element $b$ if $a <_1 b$ or $b <_1 a$, respectively, and that it is \emph{below} or \emph{above} $b$ if $a <_2 b$ or $b <_2 a$.
 Accordingly, we will speak of \emph{leftmost} and \emph{rightmost} elements.

\subsection{Two-Variable Logic}
The \emph{two-variable fragment of first-order logic} (short: $\fotwo$) is the restriction of first-order logic where only two variables can be used, though those two variables can be quantified multiple times.
The \emph{two-variable fragment of existential second-order logic} (short: $\esotwo$)
consists of all formulas of the form $\exists \bar R \varphi$ where $\bar R$ is a tuple
of relation variables and $\varphi$ is a $\fotwo$-sentence.
Since each $\fotwo$-atom can contain at most two variables we assume, in the entire article and without loss of generality, that all relation symbols are of arity at most two; see \cite[page 5]{GradelKV97} for a justification.

Two $\esotwo$-sentences $\psi$ and $\psi'$ are \emph{$\calK$-equivalent} for a class of structures $\calK$, if $\frakA \models \psi$ if and only if $\frakA \models \psi'$ for all structures $\frakA \in \calK$. An $\esotwo$-sentence is \emph{satisfiable on a class of structures $\calK$} if it has a model $\frakA\in \calK$.  We will also say that a sentence is \emph{$\calK$-satisfiable}.

Every $\esotwo$-formula can be translated into a formula of a very simple shape which is $\calK$-equivalent (see e.g. \cite{Scott1962} and  \cite[page 17]{GraedelO1999}). In this article $\calK$ will always be a class of ordered $O$-structures.

\begin{lemma}[Scott Normal Form]\label{lemma:scottnormalform}
  For every $\esotwo$-formula~$\varphi = \exists \bar R\, \xi$ there is a $\calK$-equivalent $\esotwo$-formula~$\varphi' = \exists \bar R'\, \xi'$ such that the formula $\xi'$ is of the form
  \[ \forall x \forall y \psi(x,y) \wedge \bigwedge_i \forall x \exists y \psi_i(x, y) \]
  where $\psi$ and all $\psi_i$ are quantifier-free formulas.  Moreover, $\varphi'$ can be computed in polynomial time.
\end{lemma}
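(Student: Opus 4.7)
The plan is to perform the classical Scott renaming procedure on $\xi$ from inside out, absorbing all newly introduced auxiliary predicates into the existential second-order prefix. Starting from $\varphi = \exists \bar R\,\xi$, I first $\alpha$-rename bound variables so that in every quantified subformula $Qy\,\chi(x,y)$ (with $Q\in\{\exists,\forall\}$) exactly one of $x,y$ appears free inside; this is always achievable in $\fotwo$. Then I repeatedly select an innermost quantified subformula $\theta(z) = Qy\,\chi(z,y)$ with $\chi$ already quantifier-free, introduce a fresh unary relation symbol $P_\theta$, replace every occurrence of $\theta(z)$ inside $\xi$ by $P_\theta(z)$, and add $P_\theta$ to the existential second-order prefix together with a conjunct expressing $\forall z\,(P_\theta(z)\leftrightarrow \theta(z))$.

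Each such equivalence splits into one conjunct of the form $\forall x\forall y\,\rho$ and one of the form $\forall x\exists y\,\rho'$ with $\rho,\rho'$ quantifier-free: for $Q=\exists$ it becomes $\forall x\forall y\,(\chi(x,y)\to P_\theta(x))\wedge \forall x\exists y\,(\neg P_\theta(x)\vee \chi(x,y))$, and for $Q=\forall$ we obtain the Boolean dual $\forall x\forall y\,(P_\theta(x)\to \chi(x,y))\wedge \forall x\exists y\,(P_\theta(x)\vee \neg\chi(x,y))$. Each iteration strictly reduces the quantifier depth of the part of the formula still to be processed, and only appends conjuncts of constant size, so the procedure halts after linearly many steps and produces an output of polynomial size, yielding the claimed polynomial-time bound.

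Once all inner quantifiers have been removed, the processed $\xi$ is a Boolean combination of fresh atoms surrounded by at most two outer quantifiers inherited from the original sentence. A residual $\forall z\,\rho(z)$ can be absorbed as $\forall x\forall y\,\rho(x)$, and a residual $\exists z\,\rho(z)$ as $\forall x\exists y\,\rho(y)$; the latter is equivalent on non-empty finite domains, which all structures in $\calK$ have by assumption. Gathering every $\forall\forall$-conjunct into one formula $\forall x\forall y\,\psi(x,y)$ and listing the remaining $\forall x\exists y\,\psi_i(x,y)$ separately yields the prescribed shape. Correctness is immediate because each fresh $P_\theta$ has a uniquely forced interpretation — the set of elements satisfying $\theta$ — so every model of $\varphi$ extends uniquely to a model of $\varphi'$ and every model of $\varphi'$ restricts to a model of $\varphi$. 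The only real bookkeeping obstacle is the initial variable renaming: it must ensure that each abstracted subformula has exactly one free variable, so that the abbreviating predicate can be taken unary and the resulting Boolean splittings stay within two variables.
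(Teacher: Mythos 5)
Your proposal is exactly the standard Scott renaming argument that the paper itself relies on: the paper gives no proof but cites Scott and Gr\"adel--Otto and merely remarks that the construction carries over to $\calK$-equivalence because the interpretations of the original symbols are untouched while the fresh predicates sit under the existential second-order prefix --- which is precisely the content of your final correctness paragraph. One minor slip: a closed subformula such as $\exists y\,P(y)$ has no free variable under any $\alpha$-renaming, so ``exactly one free variable'' is not always achievable; the standard remedy (abstract it by a unary $P_\theta(z)$ whose defining equivalence forces it to be trivially true or false on the non-empty domains of $\calK$) fits into your scheme without change, so nothing breaks.
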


Note that this lemma is slightly stronger than the usual Scott normal form lemma because interpretations for some relation symbols are restricted by $\calK$.  The usual proof carries over since interpretations of symbols that occur in $\varphi$ are preserved while going from a model of $\varphi$ to a model of $\varphi'$ and vice versa.

In this work it will be convenient to use an even stronger normal form. Our plan is to rephrase the satisfiability problem for $\esotwo$ into the satisfiability of a set of existential and universal constraints. This is a simple generalization of the approach taken in \cite{SchwentickZ12}. 

We need the following notions. Let $T$ be a signature. A \emph{unary $T$-type} is a maximally
consistent conjunction of unary $T$-literals using variable $x$
only. A \emph{(strictly) binary $T$-type} is a maximally consistent
conjunction of binary $T$-literals using variables $x$ and $y$, where
the conjunction includes the conjunct $x \neq y$.
Each element of a structure has a unique unary $T$-type and each pair of elements has a unique binary $T$-type.
If $\gamma$ is the binary type of $(a,b)$, then we denote the type of $(b,a)$ by $\bar \gamma$.
The set of unary and binary $T$-types are denoted by $\Sigma_T$ and $\Gamma_T$.

A \emph{constraint problem over $T$} is a tuple $C = (C_\exists,
C_\forall)$ where $C_\exists$ is a set of existential constraints and
$C_\forall$ is a set of universal constraints. An {\em
  existential constraint} $c_\exists$ is a tuple $(\sigma, E)$ where
$\sigma \in \Sigma_T$ and $E\subseteq \Gamma_T \times \Sigma_T$. A structure with
domain~$D$ satisfies $c_\exists$ if for every element $a \in D$ of
unary type~$\sigma$ there is a $(\gamma, \tau) \in E$ and an element
$b\in D$ of unary type $\tau$ such that $(a,b)$ has binary type
$\gamma$. A {\em universal constraint} $c_\forall$ is a tuple
$(\sigma, \gamma, \tau)$ where $\sigma, \tau \in \Sigma_T$ and $\gamma \in
\Gamma_T$. A structure with domain~$D$ satisfies $c_\forall$ if no
tuple $(a,b)\in D^2$ has binary type $\gamma$ if $a$ and $b$ have
unary types $\sigma$ and $\tau$.
A structure is a \emph{solution} of $C$ if it satisfies all constraints in $C_\exists$ and $C_\forall$. The problem $C$ is solvable if it has a finite solution. The \emph{size} $|C|$ of $C$ is~\mbox{$|\Sigma_T| + |\Gamma_T|+ |C_\forall|+ |C_\exists|$}. We emphasize that $C_\exists$ specifies required patterns whereas
$C_\forall$ specifies forbidden patterns.  

Let $\calK$ be class of structures over a signature $T' \subseteq T$. The problem $C$ is \emph{solvable on the class of structures $\calK$} if it has a solution $\frakA$ such that the restriction of $\frakA$ to $T'$ belongs to $\calK$. Such a solution will also be called \emph{$\calK$-solution}.

The following lemma shows that $\esotwo$-sentences can be translated into existential second-order constraint problems preserving satisfiability.

\begin{lemma}\label{lemma:formulaToConstraint}
  For every $\esotwo$-sentence $\varphi$ there is a constraint problem $C$ such that $\varphi$ is $\calK$-satisfiable if and only if $C$ has a $\calK$-solution. The constraint problem $C$ can be computed in exponential time in~$|\varphi|$.
\end{lemma}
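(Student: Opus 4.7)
\medskip\noindent\textit{Proof plan.}
The plan is to first put $\varphi$ into Scott normal form via Lemma~\ref{lemma:scottnormalform}, obtaining an equivalent $\esotwo$-sentence $\exists \bar R\,\bigl(\forall x \forall y\,\psi(x,y) \wedge \bigwedge_{i=1}^m \forall x\exists y\,\psi_i(x,y)\bigr)$, and then to encode the first-order part as a constraint problem over the extended signature $T$ that contains all free relation symbols of $\varphi$ together with the relation symbols from $\bar R$. The signature $T'$ of $\calK$ is a subset of $T$, so a $\calK$-solution of the constraint problem is exactly a $T$-structure whose reduct to $T'$ lies in $\calK$; this takes care of the outer $\exists \bar R$ without any further work, since a solution ``chooses'' the interpretations of $\bar R$.

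Next I would translate the matrix into constraints using unary and binary $T$-types. For the universal conjunct $\forall x\forall y\,\psi(x,y)$, I would let
\[
  C_\forall \;=\; \{(\sigma,\gamma,\tau)\in \Sigma_T\times\Gamma_T\times\Sigma_T : \sigma(x)\wedge\gamma(x,y)\wedge\tau(y)\wedge\neg\psi(x,y)\ \text{is consistent}\},
\]
so that $C_\forall$ forbids exactly those two-element configurations violating $\psi$ for distinct $x,y$. The diagonal case $x=y$ is handled by ruling out every unary type $\sigma\in\Sigma_T$ for which $\sigma(x)\models \neg\psi(x,x)$; this is enforced by adding to $C_\exists$ the constraint $(\sigma,\emptyset)$, which by definition cannot be satisfied if any element has type $\sigma$. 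For each conjunct $\forall x\exists y\,\psi_i(x,y)$, and for each unary type $\sigma\in\Sigma_T$ that does not already satisfy $\sigma(x)\models \psi_i(x,x)$, I would add the existential constraint $(\sigma,E_{\sigma,i})$ where
\[
  E_{\sigma,i} \;=\; \{(\gamma,\tau)\in\Gamma_T\times\Sigma_T : \sigma(x)\wedge\gamma(x,y)\wedge\tau(y)\wedge\psi_i(x,y)\ \text{is consistent}\}.
\]
Let $C = (C_\exists,C_\forall)$ be the resulting constraint problem.

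Correctness is then verified in two directions. If $\frakA$ is a $\calK$-model of $\varphi$, expanded by witnesses for $\bar R$ to a $T$-structure $\frakA^+$, then by construction every pair of distinct elements realises a binary type not in $C_\forall$, every element realises a unary type consistent with $\psi(x,x)$, and every element admits a witness of the required form for every $\forall x\exists y$ conjunct (possibly the element itself when $\psi_i(x,x)$ holds); hence $\frakA^+$ is a $\calK$-solution of $C$. Conversely, any $\calK$-solution $\frakA^+$ of $C$ satisfies the matrix of the Scott normal form of $\varphi$ by direct inspection of the constraint semantics: the absence of forbidden patterns implies $\forall x\forall y\,\psi$, and the existential constraints together with the observation that forbidding $(\sigma,\emptyset)$-types removes the remaining diagonal cases yield every $\forall x\exists y\,\psi_i$. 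Restricting $\frakA^+$ to the free relation symbols of $\varphi$ yields the desired $\calK$-model.

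For the complexity bound, note that $|\Sigma_T|$ and $|\Gamma_T|$ are both $2^{O(|T|)} = 2^{O(|\varphi|)}$, and each constraint's membership test reduces to propositional satisfiability of a quantifier-free conjunction of literals of size polynomial in $|\varphi|$, which is decidable in polynomial time in the size of the input type. Consequently $C$ has size $2^{O(|\varphi|)}$ and can be enumerated in exponential time. The main subtlety — and the only step that needs real care rather than bookkeeping — is the uniform handling of the diagonal $x=y$, both in the universal conjunct (where it becomes a unary-type exclusion encoded via an empty-$E$ existential constraint) and in each existential conjunct (where a self-loop may already serve as a witness and must therefore suppress the corresponding constraint); getting this right is what ensures that the constraint problem captures satisfiability exactly rather than approximately.
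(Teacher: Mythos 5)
Your proposal is correct and follows essentially the same route as the paper's proof: Scott normal form via Lemma~\ref{lemma:scottnormalform}, then a type-based translation of the $\forall x\forall y$-conjunct into universal constraints and of each $\forall x\exists y$-conjunct into existential constraints over the signature including $\bar R$, with the exponential cost coming from enumerating $\Sigma_T$ and $\Gamma_T$. Your explicit treatment of the diagonal $x=y$ (excluding unary types inconsistent with $\psi(x,x)$ via empty existential constraints, and omitting $(\sigma,E_{\sigma,i})$ when $\sigma$ already self-witnesses $\psi_i$) is a careful refinement of a point the paper leaves implicit, since its strictly binary types always contain the conjunct $x\neq y$.
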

  \begin{proof}
    Let $\varphi$ be an $\esotwo$-sentence. Without loss of generality we can assume, by Lemma \ref{lemma:scottnormalform}, that $\varphi = \exists \bar R \, \xi$ where $\xi$ is of the form $\forall x \forall y \psi(x,y) \wedge \bigwedge_i \forall x \exists y \psi_i(x,y)$ with quantifier-free $\psi$ and $\psi_i$. Let $T$ be the signature of the $\fotwo$-formula $\xi$. Further let $\Sigma$ and $\Gamma$ be the sets of unary and binary $T$-types.

  A $\calK$-satisfiability equivalent constraint problem for $\varphi$ is now constructed by translating $\forall x \forall y \psi(x,y)$ and $\forall x \exists y \psi_i(x,y)$ into universal and existential constraints, respectively. 
  
  Observe that, for a quantifier-free formula in disjunctive normal form $\bigvee_i \psi_i$, an equivalent quantifier-free formula in disjunctive normal form can be constructed, where (1) each disjunct is of the form $\sigma(x) \wedge \tau(y) \wedge \gamma(x,y)$ where $\sigma$ and $\tau$ are unary types and $\gamma$ is a conjunction of binary literals that use variables $x$ and $y$, and (2) for all unary types $\sigma$ and $\tau$ there is such a disjunct.
  
  Thus the first conjunct of $\xi$ is equivalent to a formula
    $$\chi = \forall x \forall y \bigvee_{\sigma, \tau \in \Sigma} \Big(\sigma(x) \wedge \tau(y) \wedge \gamma_{\sigma, \tau}(x,y)\Big)$$
        The formula $\chi$ is equivalent to the following formula:
    $$\mu = \forall x \forall y \bigwedge_{\sigma, \tau \in \Sigma}\Big( \sigma(x) \wedge \tau(y) \rightarrow \gamma_{\sigma, \tau}(x,y)\Big)$$
    To see this, let us consider a model $\frakA$ of $\chi$. For all elements $a, b$ of $\frakA$ of types $\sigma$, $\tau$, respectively, one of the disjuncts has to be satisfied. Since each element satisfies exactly one unary type, the disjunct $\sigma(x) \wedge \tau(y) \wedge \gamma_{\sigma, \tau}(x,y)$ is satisfied and therefore also $\mu$ is satisfied. 

    Now, consider a model $\frakA$ of $\mu$. Then for all elements $a, b$ of $\frakA$ the conjunction is satisfied. In particular the conjunct $\sigma(x) \wedge \tau(y) \rightarrow \gamma_{\sigma, \tau}(x,y)$ is satisfied, where $\sigma$ and $\tau$ are the types of $a$ and $b$, respectively. Therefore also $\sigma(x) \wedge \tau(y) \wedge \gamma_{\sigma, \tau}(x,y)$ and hence $\chi$ are satisfied.

    The formula $\mu$ can be easily translated into a set of universal constraints $C_\forall$. For all unary types $\sigma$ and $\tau$, the set $C_\forall$ contains a constraint $(\sigma, \gamma, \tau)$ for each binary type $\gamma$ that is not consistent with $\gamma_{\sigma, \tau}(x,y)$.
    Next we show how to translate the second part of $\xi$ into existential constraints. As before, we translate $\psi_i$ into the disjunctive normal form $\bigvee_{\sigma, \tau \in \Sigma}\big(\sigma(x) \wedge \tau(y) \wedge \gamma_{\sigma, \tau}(x,y)\big)$. Sorting by $\sigma$ and moving the existential quantifier inwards as far as possible yields that every conjunct of the second part of $\psi$ can be written as follows:
    $$\chi_i = \forall x \bigvee_{\sigma\in \Sigma} \Big(\sigma(x) \wedge \exists y \bigvee_{\tau\in \Sigma}\big( \tau(y) \wedge \gamma_{\sigma, \tau}(x,y)\big)\Big)$$
    Using a similar argument as above one can show that each $\chi_i$ is equivalent to a formula:
    $$\mu_i = \forall x \bigwedge_{\sigma\in \Sigma} \Big(\sigma(x) \rightarrow \exists y \bigvee_{\tau\in \Sigma}\big( \tau(y) \wedge \gamma_{\sigma, \tau}(x,y)\big)\Big)$$
    The formulas $\mu_i$ can be easily translated into a set of existential constraints $C_\exists$. For all $i$ and all $\sigma$, the set $C_\exists$ contains an existential constraint $(\sigma, E)$ where $E$ contains all tuples $(\gamma, \tau)$ that are consistent with $\gamma_{\sigma, \tau}$ where $\tau$ is a unary type and $\gamma$ is a binary type.

    Observe that the sets $C_\exists$ and $C_\forall$ can be computed in exponential time (where the exponential blow-up comes from using the types).
  \end{proof}

The following notion will be used frequently. If an element has a unary type $\sigma$ we also say that it is \emph{$\sigma$-labeled}.
Consider a constraint problem $C$ over $T$, a $T$-structure $\frakA$, a $\sigma$-labeled element~$a$ and an existential constraint $(\sigma, E)$. An element $b$ is a \emph{$(\sigma, E)$-witness} of $a$ if there is a $(\gamma,\tau) \in E$, the element $b$ has unary type $\tau$ and $(a,b)$ has $T$-type $\gamma$. A \emph{witness} of $a$ is an element that is a $(\sigma, E)$-witness for some existential constraint.

The following notations will be convenient for studying finite satisfiability of $\esotwo$ on ordered structures.  A constraint problem has a (finite) \emph{$(<_{1},<_{2})$-solution} if it is solvable on the class of finite ordered $(<_{1},<_{2})$-structures. We use similar definitions for $\esotwo(\suc_1, \suc_2)$ and so on.

For a constraint problem over $O \cup T$ where $O$ is a subset of the symbols $<,<_1, <_2, \ldots$ and $\suc, \suc_1, \suc_2, \ldots$ we will denote existential and universal constraints in a slightly different fashion. To this end observe that all unary $O$-types are trivial (e.g. $x < x \wedge \lnot \suc(x,x)$) and can, without loss of generality, be omitted. Each binary $T \cup O$-type is a conjunction of a binary $O$-type and binary $T$-type. Therefore we can write existential constraints of constraint problems over $T \cup O$ as tuples $(\sigma, E)$ where $\sigma \in \Sigma_{T}$ and $E$ is a set of tuples $(d, \gamma, \tau)$ with $d \in \Gamma_O$, $\gamma \in \Gamma_{T}$ and $\tau \in \Sigma_{T}$. Similarly universal constraints are written as tuples $(\sigma, d, \gamma, \tau)$ where  $\sigma, \tau \in \Sigma_{T}$, $\gamma \in \Gamma_{T}$ and $d \in \Gamma_O$.

\section{Warm-Up: One Linear Order and One Successor}\label{section:warm-up}

Before proving the main results we study the complexity of the $\esotwo$-satisfiability problem on ordered $(\suc,<)$-structures. Decidability has been established in \cite{GradelR99}. Combining Lemma 2.4 in \cite{GradelR99} and a construction from \cite[Corollary 9.2]{EiterGG00} yields at best a non-deterministic double-exponential upper bound. In this section we obtain an optimal non-deterministic exponential upper bound.
\begin{theorem}\label{theorem:successorandorder}
 $\esotwo$-satisfiability on finite ordered $(\suc,<)$-structures is \NEXPTIME-complete.
\end{theorem}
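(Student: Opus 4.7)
The $\NEXPTIME$ lower bound follows immediately: $\esotwo$ subsumes $\fotwo$, and $\fotwo$-satisfiability is already $\NEXPTIME$-hard on $<$-structures (equivalently on $(\suc,<)$-structures, since $\suc$ is determined by~$<$) by~\cite{EtessamiVW02}. So the real content is membership in $\NEXPTIME$.

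My plan is to prove a small-model theorem: if an $\esotwo$-sentence $\varphi$ is $(\suc,<)$-satisfiable, then it has a $(\suc,<)$-model of size exponential in~$|\varphi|$. Using Lemma~\ref{lemma:formulaToConstraint}, I translate $\varphi$ in exponential time into a satisfiability-equivalent constraint problem $C$ of size exponential in~$|\varphi|$. It then suffices to show that any solvable $C$ has a $(\suc,<)$-solution of size polynomial in~$|C|$, because then $\NEXPTIME$-membership follows by guessing a candidate structure of this size and checking in polynomial time that it satisfies every constraint in $C_\exists$ and~$C_\forall$.

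For the construction of a small solution, I start from an arbitrary $(\suc,<)$-solution~$\frakA$ of~$C$. For each unary type $\sigma$ realized in $\frakA$, I split the existential demands of $\sigma$ into \emph{close} demands, which must be witnessed by a successor or predecessor, and \emph{remote} demands. Close witnesses are bundled together: for each realized $\sigma$, I extract from $\frakA$ a contiguous \emph{witness block} of constantly many elements around some $\sigma$-labeled element that realizes all close witnesses of~$\sigma$. Remote witnesses can be shared across all $\sigma$-elements: for each triple $(\sigma,(d,\gamma,\tau))$ appearing in some $(\sigma, E) \in C_\exists$, I select a single representative element of $\frakA$ of unary type $\tau$ that witnesses this demand for some $\sigma$-element. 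This gives a \emph{global witness pool} of polynomially many elements. The small solution is then assembled by arranging all witness blocks and the pool in a suitable linear order and taking the induced successor.

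The main obstacle, where I expect the real technical work to lie, is verifying that this sparse assembly actually solves~$C$. Two properties must hold: (i) every element still finds all its remote witnesses in the new structure, which requires the pool to be placed so that, for each element, the needed witnesses sit on the correct side with the correct order type; and (ii) no universal constraint is violated by the new adjacencies, in particular at block boundaries where elements of a priori unrelated types become successor-related. Property~(i) can be handled by duplicating the pool into a left-copy and a right-copy and interleaving them with the blocks; property~(ii) forces the boundaries to be padded using only successor patterns already realized in~$\frakA$, so that every induced pair is consistent with~$C_\forall$. This is a pumping-style argument adapted from Otto's treatment of the $<$-only case~\cite{Otto01}, and serves as a warm-up for the more demanding constructions in Sections~\ref{section:twoorders} and~\ref{section:twosuccessors}.
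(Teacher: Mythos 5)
Your overall frame (lower bound inherited from $\fotwo$; translate to a constraint problem via Lemma~\ref{lemma:formulaToConstraint}; prove a polynomial-size solution property; guess and check) is exactly the paper's, and the lower bound is fine. The gap is in the heart of the small-model construction: the ``global witness pool'' with a \emph{single} representative per triple $(\sigma,(d,\gamma,\tau))$ does not work for remote witnesses on a linear order, even after duplicating it into a left copy and a right copy. Two concrete failures: (a) reciprocal witnessing conflicts --- if every remote $\sigma$-demand of type $\gamma$ is routed to the same $\tau$-element $b$, and $b$ itself has a remote demand that must be witnessed by a $\sigma$-element to its left, then some pair $(a,b)$ is forced to carry two binary types at once ($\gamma$ for $a$'s demand and the reverse of $\gamma'$ for $b$'s demand), and nothing guarantees $\gamma' = \bar\gamma$; this is exactly the conflict that Otto's technique resolves by keeping $3|\Gamma|$ candidate witnesses of each relevant type at each end, split into three groups used circularly, rather than one shared witness. (b) Boundary/cascade effects --- the pool elements and block elements have their own existential demands (in both directions), and the copies placed at the extremes of the new order have nothing remote on one side to witness them; your proposal closes the demands only one level deep and does not say how the extremal elements are served. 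A single representative per demand is simply too few; you need either many witnesses per type at each end (the ``rich'' case) or to keep \emph{all} elements of a type that occurs few times on one side (the ``poor'' case), and your sketch contains neither this dichotomy nor the circular assignment.

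For comparison, the paper avoids assembling a new structure from scratch. It keeps the original domain and first \emph{reassigns binary types} so that every non-local witness lies in a small set $Z$ (the first and last $k+1$ positions of each unary type, $k = 3|\Gamma|$, together with witnesses for those); the reassignment uses Otto's three-group circular scheme when a $(\sigma,\tau,k)$-rich position exists, and Lemma~\ref{lemma:richpoor} (a $(\sigma,\tau,k+1)$-poor position) to show the witnesses are already in $Z$ otherwise. Only then does it shrink, by excising a segment between two positions outside $Z$ that share the same unary type and the same binary type with their successors, so local witnesses and universal constraints survive the splice automatically because no new types are created. If you want to salvage your bottom-up assembly, you would have to re-import essentially all of this machinery (many witnesses per endpoint, the rich/poor case split, and a type-preserving splice), at which point you have reproduced the paper's proof; as written, the construction would not yield a solution of $C$.
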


This result should be compared with the known results that deciding \mbox{$\emsotwo$} on ordered $(\suc,<)$-structure and deciding $\esotwo$ on ordered $<$-structures are $\NEXPTIME$-complete~\mbox{\cite{EtessamiVW02,Otto01}}.

We present two different approaches for proving Theorem \ref{theorem:successorandorder}; a small model property based approach and an automata-based approach. Only the former approach leads to a non-deterministic exponential time algorithm. Later both approaches will be generalized in different directions to obtain decidability for larger fragments.  The small model approach, combined with a technique from \cite{SchwentickZ10,SchwentickZ12}, is used to obtain a decision algorithm for $\esotwo$ on $(\suc_1, <_1, <_2)$-structures in Section \ref{section:twoorders}. The automata-based approach is used to solve the $\esotwo$-satisfiability problem on $(\suc_1, \suc_2, <_2)$-structures in Section \ref{section:twosuccessors}. 

To decide whether an $\esotwo$-sentence $\varphi$ has a model which is a finite $(\suc,<)$-structure, $\varphi$ is converted into a satisfiability equivalent constraint problem $C$ whose size is exponential in the size of $\varphi$ using Lemma~\ref{lemma:formulaToConstraint}. We show, in Lemma~\ref{lemma:succordfinitemodel}, that if $C$ has a finite $(\suc,<)$-solution then it has a $(\suc,<)$-solution of size at most $N$ where $N$ is polynomial in the size of $C$. Hence, a non-deterministic exponential time algorithm can guess a structure of size at most $N$ and verify that it is indeed a solution of $C$. This proves Theorem~\ref{theorem:successorandorder}. In the automata-based approach the satisfiability of $C$ is checked by a finite state automaton, see Lemma~\ref{lemma:succorder}.

The following notion will be useful in both approaches. Let $w$ be a word. A position $a$ of $w$ is called \emph{$(\sigma, \tau, k)$-rich} if there are at least $k$ $\sigma$-labeled positions (strictly) before $a$ and at least $k$ $\tau$-labeled positions (strictly) after $a$.  A position $a$ of $w$ is \emph{$(\sigma, \tau, k)$-poor} if there are at most $k$ $\sigma$-labeled positions (strictly) before $a$ and at most $k$ $\tau$ labeled positions (strictly) after $a$. We stress that if a position is not $(\sigma, \tau, k)$-rich than it is not necessarily $(\sigma, \tau, k)$-poor, and vice versa.

\begin{lemma}\label{lemma:richpoor}
  If a word has no $(\sigma, \tau, k)$-rich position then it has a $(\sigma, \tau, k+1)$-poor position.
\end{lemma}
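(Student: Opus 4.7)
The plan is to exploit the monotonicity of two natural counting functions along the word. For each position $a$ of the word, let $L(a)$ denote the number of $\sigma$-labeled positions strictly before $a$ and let $R(a)$ denote the number of $\tau$-labeled positions strictly after $a$. Read left to right, $L$ is non-decreasing while $R$ is non-increasing, and, crucially, $L(a^+) \leq L(a) + 1$ whenever $a^+$ is the position immediately to the right of $a$ (equality holding exactly when $a$ is $\sigma$-labeled).

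Rephrasing the hypothesis, no position being $(\sigma,\tau,k)$-rich means every position $a$ satisfies $L(a) < k$ or $R(a) < k$. I would then define $A = \{a : L(a) < k\}$ and $B = \{a : R(a) < k\}$, so that $A \cup B$ is the set of all positions. By the monotonicity observations, $A$ forms a prefix of the word and $B$ forms a suffix. If $A \cap B \neq \emptyset$, then any position in the intersection satisfies $L(a) \leq k-1$ and $R(a) \leq k-1$, so it is already $(\sigma,\tau,k-1)$-poor, hence a fortiori $(\sigma,\tau,k+1)$-poor, and we are done.

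Otherwise $A$ and $B$ partition the positions into two consecutive, non-overlapping blocks. Let $a^*$ be the last position in $A$ (the boundary cases where $A$ is empty or consists of the whole word are handled analogously by inspecting the first or last position) and let $a^+$ denote its immediate successor. Then $a^+ \in B$, so $R(a^+) \leq k-1 \leq k+1$, while the increment bound gives $L(a^+) \leq L(a^*) + 1 \leq k \leq k+1$. Hence $a^+$ is $(\sigma,\tau,k+1)$-poor as required. There is essentially no obstacle in the argument: it is a short case analysis, and the slack of $+1$ in the statement is precisely what is needed to absorb the unit jump of $L$ across the boundary between $A$ and $B$.
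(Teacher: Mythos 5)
Your proof is correct and rests on the same elementary counting idea as the paper's: everything hinges on the position where the number of $\sigma$-labeled predecessors first reaches the threshold $k$, which cannot be rich and hence has few $\tau$-labeled successors. The paper phrases this as a proof by contradiction via the $(k+1)$-th $\sigma$-labeled position, while your direct prefix/suffix decomposition into $A$ and $B$ is essentially a constructive reformulation of the same argument (and, as a small bonus, it even yields the slightly stronger conclusion that a $(\sigma,\tau,k)$-poor position exists).
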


\begin{proof}
  Let $w$ be a word with no $(\sigma, \tau, k)$-rich position. Towards a contradiction assume that $w$ has no $(\sigma, \tau, k+1)$-poor position. Then, by definition, for each position $i$ of $w$ there are at least $k+1$ $\sigma$-labeled positions strictly left of $i$ or at least $k+1$ $\tau$-labeled positions strictly right of $i$.

  In particular there are at least $(k+1)$ $\sigma$-labeled positions to the left of the last position of $w$. Let $i$ be the $(k+1)$th $\sigma$-labeled position of $w$. Since $i$ has only $k$ $\sigma$-labeled positions to its left, it must have at least $k+1$ $\tau$-labeled positions to is right. Hence, $i$ is $(\sigma, \tau, k)$-rich; a contradiction.
\end{proof}

As outlined above, the proof of Theorem~\ref{theorem:successorandorder} follows immediately from the following small model property and Lemma~\ref{lemma:formulaToConstraint}. The proof of the small model property is in the same spirit as the proof of Lemma 2.3 in \cite{GradelR99}.
\begin{lemma}\label{lemma:succordfinitemodel}
  If a constraint problem $C$ has a finite $(\suc, <)$-solution then it has such a solution of size polynomial in $|C|$.
\end{lemma}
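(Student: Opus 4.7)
The plan is a standard small-model argument by cut-and-paste on the word representing the solution. Viewing a finite $(\suc,<)$-solution $\frakA$ of $C$ as a labelled word $w$ of length $n$, I aim to show that if $n$ exceeds a polynomial threshold $N$ in $|C|$, then there exist two positions $p < q$ such that excising the block $[p,q-1]$ and joining $p-1$ to $q$ as the new close pair yields a shorter solution; iterating gives a solution of size at most $N$.

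The first step is to designate a set $S$ of \emph{crucial} positions. For each element $a$ and each existential constraint $(\sigma, E)$ applicable at $a$, fix a specific witness $\mathrm{wit}(a, (\sigma, E)) \in \frakA$. Close witnesses are forced to be the immediate neighbours, so only remote witnesses merit a designation. For each tuple $(\sigma, E, (d, \gamma, \tau))$ with $d$ remote -- say, ``to the right'' -- it suffices to reserve the $O(|C_\exists|)$ extreme $\tau$-labelled positions of appropriate binary type as potential witness targets: every $\sigma$-element needing such a witness is either rich enough to the right in $\tau$-elements of type $\gamma$ (and hence is interchangeably served by several positions) or lies in a $(\sigma,\tau,k)$-poor region of bounded size, by Lemma~\ref{lemma:richpoor}. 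Collecting these designated targets together with a bounded prefix and suffix of $w$ yields $|S| \leq \mathrm{poly}(|C|)$.

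Now assume $n$ exceeds $N := |S| \cdot |\Sigma_T|^2 \cdot |\Gamma_T|$, still polynomial in $|C|$. Outside $S$ there remain at least $n - |S|$ positions, and by pigeonhole two of them -- $p < q$ -- share the same local profile $(\sigma_{p-1}, \sigma_p, \gamma_{p-1,p}) = (\sigma_{q-1}, \sigma_q, \gamma_{q-1,q})$, where $\sigma_i$ denotes the unary type of the $i$-th position and $\gamma_{i,i+1}$ the binary $T$-type of the $i$-th successor pair, while moreover no element of $S$ lies in $[p, q-1]$. I excise positions $p, p+1, \ldots, q-1$, retain the $T$-types of all surviving pairs, and assign $\gamma_{q-1, q}$ to the new close pair $(p-1, q)$.

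Verification proceeds in four parts: universal constraints on remaining remote pairs are preserved because their binary types are unchanged; the new close pair $(p-1, q)$ inherits its validity from the original $(q-1, q)$ via matching unary types and identical $T$-type; close-witness constraints for $p-1$ and $q$ are preserved by the same local profile match (since close-witness requirements depend only on unary type, and the forced $T$-type is the one that served $q-1$ and $q$ in $\frakA$); finally, remote-witness constraints for all remaining elements are preserved because each designated witness lies outside the excised segment by construction of $S$. The main obstacle is the careful counting of $|S|$: one must ensure Lemma~\ref{lemma:richpoor} applies uniformly across all existential requirements and that the forced $T$-types of close witnesses do not conflict with the sharing of remote witnesses, so that $|S|$ remains polynomial rather than degenerating to $\Omega(n)$ -- this is the subtle step where the interaction between $\suc$ and $<$ must be handled.
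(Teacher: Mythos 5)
There is a genuine gap, and it sits exactly where you flag ``the subtle step.'' Your plan keeps the binary $T$-types of all surviving pairs unchanged and argues that remote-witness constraints survive because ``each designated witness lies outside the excised segment by construction of $S$.'' But the designated remote witness of an arbitrary element $a$ is whatever it happened to be in $\frakA$, and nothing in a constraint problem forces witnesses to be shared or extreme: each of the up to $n$ elements may have its unique $\gamma$-typed witness at a distinct position, so the set of actually designated witnesses can have size $\Omega(n)$, not $\mathrm{poly}(|C|)$. If instead $S$ contains only the ``reserved extreme $\tau$-positions,'' then being $\tau$-labelled is not enough for them to serve as witnesses of other elements: a witness requirement $(d,\gamma,\tau)$ constrains the binary type of the \emph{pair} $(a,b')$, which is not a property of $b'$ alone. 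To redirect witnesses to a small reserved set you must \emph{reassign} binary types on remote pairs, and that reassignment has a two-sided consistency problem (if $(a,b')$ is retyped $\gamma$ to serve $a$, and $b'$ simultaneously needs $(b',a)$ retyped $\gamma'$ to serve $b'$, you need $\gamma'=\bar\gamma$). Your ``interchangeably served by several positions'' remark does not address this, and your richness condition as stated (``rich in $\tau$-elements of type $\gamma$'') is not even the right notion, since after retyping the type $\gamma$ is created rather than found.

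This is precisely what the paper's proof supplies and yours omits. The paper first builds an intermediate solution $\frakA'$ on the same domain in which all non-local witnesses come from a polynomial-size set $Z$: for each pair $(\sigma,\tau)$ and remote direction, in the $(\sigma,\tau,k)$-rich case it retypes pairs using Otto's circular scheme (split the first $k$ $\sigma$-elements into $A_1\cup A_2\cup A_3$ and the last $k$ $\tau$-elements into $B_1\cup B_2\cup B_3$ with $|A_i|=|B_i|=|\Gamma|$; $A_i$-elements take witnesses in $B_i$, $B_i$-elements in $A_{i+1 \bmod 3}$, everyone else in $B_1$ or $A_1$), the cyclic shift guaranteeing that no ordered pair receives conflicting demands; in the non-rich case Lemma~\ref{lemma:richpoor} yields a $(\sigma,\tau,k+1)$-poor position, and then the existing witnesses are already inside $Z$ without any retyping. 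Only after this surgery does the pigeonhole excision (your cut-and-paste step, which matches the paper's final paragraph) go through. So your skeleton of ``small crucial set $+$ pigeonhole splice'' is the right outer shape, but without the retyping stage the central claim $|S|\leq\mathrm{poly}(|C|)$ compatible with unchanged binary types is false in general, and the argument does not close.
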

\begin{proof}   Assume that $C \df (C_\exists, C_\forall)$ is a constraint problem over a signature $T \cup \{\suc, <\}$. Let $\Sigma \df \Sigma_T$ and $\Gamma \df \Gamma_T$. Let $k \df 3 |\Gamma|$.

  We show that if $C$ has a finite $(\suc,<)$-solution $\frakA$ with at least $N$ elements then it also has a $(\suc,<)$-solution $\frakB$ with $|\frakB| < N$. The number $N$ is polynomial in $|C|$ and will be specified later.

  For each element $a$ of $\frakA$ denote by $W(a)$ a set containing witnesses for $a$ for each existential constraint. An element $b \in W(a)$ is a \emph{local witness} if $\suc(a,b)$, $a=b$, or $\suc(b,a)$. Otherwise it is a \emph{non-local witness}.

  We first construct, from $\frakA$, a solution $\frakA'$ of $C$ whose domain and unary types coincide with the domain and unary types of $\frakA$ but whose set of non-local witnesses is small. Afterwards we argue that a smaller solution $\frakB$ can be obtained from $\frakA'$ by removing some elements.
  
  Towards constructing $\frakA'$ let $Z_1$ be the set that contains, for every $\sigma \in \Sigma$, the first $k+1$ $\sigma$-positions and the last $k+1$ $\sigma$-positions of $\frakA$ (if these exist).  Further let $Z_2$ be a set that contains a witness for each position in $Z_1$ and each existential constraint, and let $Z \df Z_1 \cup Z_2$. Observe that $Z_1$ and $Z_2$ are of size polynomial in $|\Sigma| |\Gamma|$.

  We reassign some of the binary $T$-types of $\frakA$ in order to obtain~$\frakA'$. The goal is that every element $a$ in $\frakA'$ has a set $W'(a)$ of witnesses for $C_\exists$ such that (1) $W(a)$ and $W'(a)$ coincide with respect to local witnesses, and (2) all non-local witnesses in $W'(a)$ are from $Z$. To this end let $\sigma, \tau \in \Sigma$ and $d \df x < y \wedge \neg \suc(x,y)$. Denote $\bar d \df   y < x \wedge \neg \suc(y,x)$.
  
  If there is a $(\sigma, \tau, k)$-rich position $u$ in $\frakA$ then we reassign the binary types of all $\sigma$- and $\tau$-positions $a$ and $b$ satisfying $d$ by using a technique employed by Otto for constructing small models for satisfiable $\fotwo(<)$-sentences \cite{Otto01}. For completeness we recall the construction. Let $A = A_1 \cup A_2 \cup A_3$ with disjoint $A_i$ and $|A_i| = |\Gamma|$ contain the first $k$ $\sigma$-labeled elements. Similarly let $B = B_1 \cup B_2 \cup B_3$ with disjoint $B_i$ and $|B_i| = |\Gamma|$ contain the last $k$ $\tau$-labeled elements. Assume that $\Gamma \df \{\gamma_1, \ldots, \gamma_m\}$. Then: 
    
\begin{enumerate}
      \item[(A1)] Witnesses for elements in $A \cup B$ are assigned as follows:
        \begin{enumerate}
          \item If $a \in A_i$ is a $\sigma$-labeled element that has a $(d, \gamma_\ell, \tau)$-witness $b \in W(a)$ in $\frakA$ then the binary $T$-type of $(a, b')$ is set to $\gamma_\ell$ where $b'$ is the $\ell$th element of $B_i$. The element $b'$ is the intended $(d, \gamma_\ell, \tau)$-witness of $a$ in $\frakA'$.
          \item If $b \in B_i$ is a $\tau$-labeled element that has a $(\bar d, \gamma_\ell, \sigma)$-witness $a \in W(b)$ in $\frakA$ then the binary $T$-type of $(b, a')$ is set to $\gamma_\ell$ where $a'$ is the $\ell$th element of $A_{i+1}$ (where $i+1$ is calculated modulo 3).  The element $a'$ is the intended $(\bar d, \gamma_\ell, \tau)$-witness of $b$ in~$\frakA'$.
        \end{enumerate}
      \item[(A2)] Witnesses for all other tuples of $\sigma$- and $\tau$-labeled elements are assigned as follows:
        \begin{enumerate}
          \item If $a \notin A$ is a $\sigma$-labeled element that has a $(d, \gamma_\ell, \tau)$-witness $b \in W(a)$ in $\frakA$ then the binary $T$-type of $(a, b')$ in $\frakA'$ is $\gamma_\ell$ where $b'$ is the $\ell$th element of $B_1$. The element $b'$ is the $(d, \gamma_\ell, \tau)$-witness of $a$ in $\frakA'$.
          \item If $b \notin B$ is a $\tau$-labeled element that has a $(\bar d, \gamma_\ell, \sigma)$-witness $a \in W(b)$ in $\frakA$ then the binary $T$-type of $(b, a')$ in $\frakA'$ is $\gamma_\ell$ where $a'$ is the $\ell$th element of $A_{1}$. The element $a'$ is the $(d, \gamma_\ell, \tau)$-witness of $b$ in $\frakA'$.
        \end{enumerate}
      \item[(A3)] If a tuple $(a, b)$ with $\sigma$-labeled $a$, $\tau$-labeled $b$ and satisfying $d$ has not been assigned a binary $T$-type in $\frakA'$ so far, then it inherits its type from $\frakA$.
    \end{enumerate}
    
  Observe that no conflicts arise from (A1) and (A2). Furthermore, for all $\sigma$- and $\tau$-labeled elements $a$ there is a set $W'(a)$ satisfying (1) and (2). Moreover, no conflicts with universal constraints arise since no new types have been created. This concludes the case when there is a $(\sigma, \tau, k)$-rich position in $\frakA$.
  
  If there is no $(\sigma, \tau, k)$-rich position $u$ in $\frakA$ then Conditions (1) and (2) are already satisfied for all $\sigma$- and $\tau$-labeled positions $a$ and $b$ satisfying $d = x < y \wedge \suc(x,y)$. To see this we argue as follows. By Lemma \ref{lemma:richpoor} there is a $(\sigma, \tau, k+1)$-poor position $v$. Now let $a$ be a $\sigma$-labeled  position. If $a \leq v$ then all $(d, \gamma, \tau)$-witnesses $b \in W(a)$ are contained in $Z$ by construction (as $a$ is one of the $k+1$ leftmost $\sigma$-labeled positions). If $a > v$ then all $(d, \gamma, \tau)$-witnesses of $a$ are among the last $k+1$ $\tau$-labeled positions which are also contained in~$Z$. The argument for $\tau$-labeled positions is symmetric.
  
  The structure $\frakA'$ thus constructed satisfies (1) and (2).
  
  It remains to construct $\frakB$. If $|\frakA'| > N$ for $N \geq c (|\Sigma||\Gamma|)^4$ for an appropriate constant $c$ then there are positions \mbox{$a_1, a_2 \notin Z$} with successors $b_1$ and $b_2$ such that (i) there is no position $z \in Z$ with $a_1 < z < a_2$, (ii) $a_1$ and $a_2$ have the same unary $T$-type, and (iii)  $(a_1, b_1)$ and $(a_2, b_2)$ have the same binary $T$-type. The solution~$\frakB$ is obtained from $\frakA'$ by removing all elements between $a_1$ and $a_2$ (including $a_2$), and assigning to $(a_1, b_2)$ the binary type of $(a_1, b_1)$ in $\frakA'$. Then $\frakB$ satisfies all universal constraints of $C$ since no new types have been created. Further $\frakB$ satisfies all existential constraints of $C$ since elements inherit their local and non-local witnesses from $\frakA'$, and $\frakA'$ only uses elements from $Z$ as non-local witnesses. 
  \end{proof}

  Now we present the automata-based approach. As discussed above, this approach only yields an exponential space algorithm, yet it will later be used as the basis for a decision algorithm for a larger fragment.

\begin{lemma}\label{lemma:succorder}
  For every constraint problem $C$ there is a non-deterministic finite state automaton $\calA$ such that $C$ has a finite $(\suc,<)$-solution if and only if $L(\calA)$ is non-empty.
\end{lemma}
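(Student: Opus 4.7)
The plan is to view a finite $(\suc,<)$-solution of $C$ as a word $w = \sigma_1 \cdots \sigma_n$ over the alphabet $\Sigma_T$ of unary $T$-types, where position $i$ carries the unary type of the $i$-th element in the linear order. Since for any two distinct positions the $O$-type $d \in \Gamma_O$ between them is fully determined by their relative location (predecessor, successor, left-remote, right-remote), the only remaining freedom is the choice of a binary $T$-type for each pair. As the universal constraints in $C_\forall$ restrict each pair's binary type independently, writing $A(\sigma, d, \tau) \df \{\gamma \in \Gamma_T : (\sigma, d, \gamma, \tau) \notin C_\forall\}$, any pair-by-pair assignment that picks some $\gamma \in A(\sigma, d, \tau)$ satisfies $C_\forall$, provided $A(\sigma, d, \tau) \neq \emptyset$ whenever such a pair actually occurs in $w$.

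Consequently, satisfaction of an existential constraint $(\sigma, E)$ at position $i$ reduces to a combinatorial condition on the unary types of the neighbours of $i$ and on the sets of unary types appearing strictly before or strictly after $i$. For each $(\sigma, E)$ and each direction $d$ I precompute $E(d) \df \{\tau \in \Sigma_T : \text{there is } \gamma \in A(\sigma, d, \tau) \text{ with } (d, \gamma, \tau) \in E\}$. A position of type $\sigma$ then satisfies $(\sigma, E)$ iff $\sigma_{i-1} \in E(\text{pred})$, or $\sigma_{i+1} \in E(\text{succ})$, or $\{\sigma_j : j \leq i-2\} \cap E(\text{left-remote}) \neq \emptyset$, or $\{\sigma_j : j \geq i+2\} \cap E(\text{right-remote}) \neq \emptyset$.

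Based on this I define $\calA$ to scan $w$ left to right, maintaining in its state (i) the unary types of the two most recently read positions (for local witness tests and for universal constraints on successor pairs), (ii) the set $L \subseteq \Sigma_T$ of unary types seen at positions $\leq i-2$, and (iii) a non-deterministically updated set $R \subseteq \Sigma_T$ of unary types still to appear at positions $\geq i+2$. Initially $R$ is guessed; at each step $\calA$ verifies that the next symbol lies in $R$, that $A(\sigma, d, \tau) \neq \emptyset$ for all type pairs whose relative position has just become determined (the successor pair $(\sigma_{i-1}, \sigma_i)$ and the remote pairs involving $L$ and $R$), and that every existential constraint for the position two steps back is fulfilled via the precomputed $E(d)$-tables together with the stored neighbours, $L$ and $R$. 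Boundary positions at the ends of $w$ are handled by explicit initial and final transitions.

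The main obstacle is ensuring that $R$ genuinely enumerates unary types still to come rather than merely over-approximating them, since otherwise $\calA$ could claim a right-remote witness based on a phantom type. I address this by keeping $R$ monotonically decreasing (types are only removed) together with the invariant that each future symbol must lie in the current $R$; removing $\sigma$ from $R$ commits $\calA$ to no further occurrence of $\sigma$, and acceptance additionally demands that $R = \emptyset$ at the end, so every claimed type actually materialises. For correctness, a $(\suc,<)$-solution of $C$ directly yields a word on which $\calA$ succeeds by guessing $R$ to be exactly the set of types in later positions at every step; conversely, an accepting run of $\calA$ on $w$ lets me construct a structure with domain $\{1, \ldots, n\}$, the natural linear order, unary types $\sigma_i$, and binary $T$-types chosen, for each element's promised existential witness, from the non-empty $A$ so as to realise a triple in $E$, with the remaining pairs receiving any allowed $\gamma$.
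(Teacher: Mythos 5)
There is a genuine gap in the backward direction of your equivalence (accepting run $\Rightarrow$ solution), and it is precisely the difficulty the paper's proof is built around. Your per-position criterion reduces satisfaction of an existential constraint $(\sigma,E)$ at a position to the mere \emph{presence} of a suitable unary type among its neighbours and among the types occurring remotely before/after it, and your reconstruction step then picks, ``for each element's promised existential witness,'' some allowed $\gamma$ realising a triple in $E$. But binary types are assigned to pairs: the type of $(a,b)$ determines the type of $(b,a)$, so if $a$ needs $b$ as a witness with pair type $\gamma$ and $b$ simultaneously needs $a$ as a witness with pair type $\gamma'$, you must have $\gamma'=\bar\gamma$, or each must be able to switch to a different witness. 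Your automaton, which only tracks the \emph{sets} of unary types to the left and to the right, cannot certify that this is possible. Concretely, take $\Sigma_T=\{\sigma,\tau,\rho\}$, add universal constraints $(\sigma,d,\gamma,\sigma)$ and $(\tau,d,\gamma,\tau)$ for all $d,\gamma$ (so at most one $\sigma$- and one $\tau$-element can exist), let $\rho$- and $\tau$-elements require some $\sigma$-witness (any direction, any type), let the $\sigma$-element require a remote $\tau$-witness to the right with pair type exactly $\gamma_1$, and let the $\tau$-element require a remote $\sigma$-witness to the left with pair type exactly $\gamma_2$ with $\gamma_2\neq\bar\gamma_1$, no other universal constraints. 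This $C$ has no finite $(\suc,<)$-solution: any model contains the unique $\sigma$ and the unique $\tau$, which are forced to be each other's witnesses with incompatible types. Yet your automaton accepts $\sigma\rho\rho\tau$, since every position sees the required unary types in the required directions and all the sets $A(\cdot,\cdot,\cdot)$ involved are non-empty. So $L(\calA)\neq\emptyset$ while $C$ is unsolvable, and the lemma fails for your construction. (A smaller, fixable issue of the same flavour: the sets $A(\sigma,d,\tau)$ are not independent per ordered pair either; an admissible $\gamma$ for $(a,b)$ must also have $\bar\gamma$ admissible for $(b,a)$.)

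What your construction is missing is the machinery the paper uses to make witness assignment for remote pairs consistent: for each pair $\sigma,\tau$ the automaton decides whether there is a $(\sigma,\tau,k)$-rich position for a constant $k$ of order $|\Gamma_T|$. In the rich case, counting thresholds (E2) guarantee enough candidate witnesses so that Otto's three-block circular assignment can realise all requested witness types without ever constraining both directions of the same pair; in the poor case (via the rich/poor lemma) only constantly many positions on each side need remote witnesses of this kind, and the automaton names them individually, guesses their designated witnesses, and explicitly checks reverse-type compatibility (condition (E3c)). Some such constant-bounded counting and explicit pairing—absent from your set-based state—appears unavoidable, and without it the reconstruction of a solution from an accepting run does not go through.
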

Here $L(\calA)$ denotes the language recognized by $\calA$.
\begin{proof}
  Assume that $C \df (C_\exists, C_\forall)$ is a constraint problem over a signature $T \cup \{\suc, <\}$. Let $\Sigma \df \Sigma_T$ and $\Gamma \df \Gamma_T$. 

  Without loss of generality we assume that the witnesses requested by the existential constraints of $C$ do not contradict universal constraints. More precisely, for an existential constraint $(\sigma, E)$ and every $(d, \gamma, \tau) \in E$ we assume that there is no universal constraint~$(\sigma, d, \gamma, \tau)$. 

  We construct a finite state automaton $\calA$ such that $\calA$ accepts a word over $\Sigma$ if and only if $C$ has a solution. Intuitively the automaton $\calA$ interprets words as $(\suc, <)$-structures with no binary relations from $T$. In order to accept a word $w$, it has to verify that binary types can be assigned to all pairs of positions in a way consistent with $C$.

  The main difficulty in verifying the existence of an assignment of binary types is to ensure that the types of tuples $(a,b)$ and $(b, a)$ are consistent. More precisely, if $(a,b)$ has to be typed $\gamma \in \Gamma$ due to some existential constraint and if $(b,a)$ has to be typed $\gamma' \in \Gamma$ due to some other existential constraint, then $\gamma$ and $\gamma'$ have to be compatible, that is $\gamma' = \bar \gamma$.
  
  In the following we describe the construction of $\calA$, afterwards we argue that the construction is correct. For an easier exposition of the automaton $\calA$, we often assume that positions are labeled with some extra information. It can be easily verified that those labels could also be guessed by the automaton (or, alternatively, they could be contained in an extended alphabet).
  
    \begin{enumerate}
    \item[(E)] We assume that for every existential constraint $(\sigma, E)$, all $\sigma$-labeled positions $a$ are labeled with a fresh label $(\sigma, d, \gamma, \tau)$ such that $(d, \gamma, \tau) \in E$. The intention is that the $(\sigma, E)$-witness of $a$ satisfies $(d, \gamma, \tau)$.
  \end{enumerate}
  
  The automaton has to verify that binary $T$-types can be assigned such that the witnesses declared in (E) exist and all pairs of positions satisfy the universal constraints. To this end the automaton handles positions that are $<$-close to each other and positions that are $<$-remote from each other in a different way.
  
  Dealing with positions that are close to each other is simple. Each position $a$ has at most two positions that are $<$-close to it: there might be a position  $b_1$ with $\suc(a, b_1)$ and a position $b_2$ with~$\suc(b_2, a)$. The positions $b_1$ and $b_2$ might not exist (if $a$ is the first or last position with respect to $<$). For all $a$ the automaton can guess and verify the binary types for $(a, b_1)$ and $(a, b_2)$.
 
  \begin{enumerate}
    \item[(L1)] (Local types) We assume that each position $a$ is labeled by up to two labels $\gamma_1, \gamma_2 \in \Gamma$. The intention is that $\gamma_1$ is the binary $T$-type of $(a,b_1)$ (if the position $b_1$ exists), and likewise for $\gamma_2$.
    \item[(L2)] (Consistency of local types) The automaton verifies that the labels are consistent, that is, e.g., if $a$ and $b$ are positions with $\suc(a,b)$ then the label $\gamma_1$ of $a$ (i.e. the type guessed for~$(a, b)$) is compatible with the type $\gamma_2$ of $b$ (i.e. the type guessed for~$(b, a)$).
    \item[(L3)] (Local witnesses) For every $\sigma$-labeled position $a$ that is labeled by $(\sigma, d, \gamma, \tau)$ due to~(E), the automaton verifies that if \mbox{$d = \suc(x,y)$} then the label $\gamma_1$ of $a$ is $\gamma$ and that its successor is labeled with $\tau$. Likewise for $d = \suc(y,x)$.
    \item[(L4)] (Local universal constraints) For all $\sigma$- and $\tau$-labeled positions $a$ and $b$ with $\suc(a,b)$ the automaton verifies that if~$a$ is labeled~$\gamma_1$ then there is no universal constraint $(\sigma, \suc(x,y), \gamma_1, \tau)$. Likewise for $\suc(b,a)$. 
  \end{enumerate}

  Verifying the existence of a type assignment for positions that are remote from each other is more intricate. The automaton $\calA$ has to check that there is at least one binary $T$-type consistent with the universal constraints that can be assigned to every pair of far-away positions and that witnesses for the existential constraints can be assigned consistently according to (E). 
  
  The former condition can be verified easily (and analogously to~(L4)):
  \begin{enumerate}
   \item[(U)] For all $\sigma$- and $\tau$-labeled positions $a$ and $b$ satisfying $d = x < y \wedge \neg \suc(x,y)$ the automaton verifies that there is a $\gamma$ such that there is no universal constraint $(\sigma, d, \gamma, \tau)$. Similarly for $\bar d$.
  \end{enumerate}

  Testing that existential witnesses can be assigned to remote positions is more involved. We discuss how $\calA$ verifies that binary types for all $\sigma$- and $\tau$-labeled positions $a$ and $b$ satisfying \mbox{$d \df x < y \wedge \neg \suc(x,y)$} can be assigned; the other case is symmetric. Let $\bar d \df d(y,x)$. To verify that binary types can be assigned to such $\sigma$- and $\tau$-labeled positions, the automaton $\calA$ guesses whether there is a $(\sigma, \tau, k)$-rich position $u$ with~\mbox{$k \df 3(|\Gamma|+3)$}. 
  
  If such a position $u$ exists then the automaton tests that every $\sigma$-labeled position has as many $\tau$-labeled positions to its right as is required by (E), and that every $\tau$-labeled position has sufficiently many $\sigma$-labeled positions to its left. If this is the case then binary types can be assigned using an assignment technique similar to Otto's technique \cite{Otto01} for reducing the size of models for $\fotwo(<)$ sentences (see the correctness proof below). 
  
  If there is no $(\sigma, \tau, k)$-rich position $u$ then, by Lemma \ref{lemma:richpoor} there is a $(\sigma, \tau, k+1)$-poor position~$v$. The automaton exploits this structure to guess and verify the binary types in this case.
  
  More precisely, for all $\sigma$ and $\tau$ the automaton does the following:

  \begin{enumerate}
    \item[(E1)] It guesses whether there is a $(\sigma, \tau, k)$-rich position $u$. The correctness of the guess can be easily verified by $\calA$.
    \item[(E2)] If there is such a position $u$ then $\calA$ verifies the following:
      \begin{enumerate}
       \item If a $\sigma$-labeled position $a$ is labeled by $(\sigma, d, \gamma_1, \tau), \ldots,$ $(\sigma, d, \gamma_m, \tau)$ in (E) then there are $m$ $\tau$-labeled positions to the right of $a$.
       \item Symmetrically, if a $\tau$-labeled position $a$ is labeled by $(\tau, \bar d, \gamma_1, \sigma)$, $\ldots$, $(\tau, d, \gamma_m, \sigma)$ in (E) then there are $m$ $\sigma$-labeled positions to the right of $a$.
      \end{enumerate}
    \item[(E3)] If there is no $(\sigma, \tau, k)$-rich position then $\calA$ guesses a $(\sigma, \tau, k+1)$-poor position $v$.     Then: 
      \begin{enumerate}
	\item We assume that positions are labeled by the following extra information (using fresh labels depending on $\sigma$ and $\tau$):
	\begin{itemize}
	  \item The $i$th $\sigma$-labeled position $a$ to the left of $v$ is labeled by $\sigma_i$. The $i$th $\tau$-labeled position $a$ to the right of $v$ is labeled by $\tau_i$. (As $v$ is $(\sigma, \tau, k+1)$-poor, there are at most $k+1$ such $\sigma_i$ and $\tau_i$.)
	  \item The intended witnesses for $\sigma_i$- and $\tau_i$-labeled positions are labeled as follows:
	    \begin{itemize}
	     \item If the $\sigma_i$-labeled position $a$ is labeled by $(\sigma, \gamma, d, \tau)$ in (E), then there is a $\tau$-labeled position $b$ which is also labeled with $(\sigma_i, \gamma, d, \tau)$ such that $(a,b)$ satisfies $d$. 
	     \item Likewise, if the $\tau_i$-labeled position $a$ is labeled by $(\tau, \gamma, \bar d, \sigma)$ in (E), then there is a $\sigma$-labeled position $b$ which is also labeled with $(\tau_i, \gamma, \bar d, \sigma)$ such that $(a,b)$ satisfies~$\bar d$. 
	    \end{itemize}
	  \item Positions with intended witnesses from $\sigma_i$- and $\tau_i$-labeled positions are labeled as follows:
	    \begin{itemize}
	      \item  Each position $a$ to the left of $v$ that is labeled by $(\tau, \gamma, \bar d, \sigma)$ in (E) is also labeled with $(\tau, \gamma, \bar d, \sigma_i)$ for some $i$ such that the tuple $(a, b)$ satisfies~$\bar d$ where $b$ is the $\sigma_i$-labeled position $b$ .
	      \item  Likewise, each $(\sigma, \gamma, d, \tau)$-labeled position $a$ to the right of $v$ is labeled with $(\sigma, \gamma, d, \tau_i)$ for some $i$ such that  the tuple $(a, b)$ satisfies~$d$ where $b$ is the $\sigma_i$-labeled position $b$.
	    \end{itemize}
	\end{itemize}
	\item The automaton verifies that the labels are consistent, that is:
	  \begin{itemize}
	    \item No $\tau$-labeled position to the left of $v$ is labeled with $(\sigma_i, \gamma, d, \tau)$ and with $(\tau, \gamma', \bar d, \sigma_i)$ where $\gamma$ and $\gamma'$ are not reverse types.
	    \item Likewise, no $\sigma$-labeled position to the right of $v$ is labeled with $(\tau_i, \bar \gamma', d, \sigma)$ and with $(\sigma, \gamma, d, \tau_i)$ where $\gamma$ and $\gamma'$ are not reverse types.
	  \end{itemize}
      \end{enumerate}
  \end{enumerate}
  We emphasize again that all labels assigned in (E3) can also be guessed by $\calA$.

  We argue that the construction of $\calA$ is correct. Clearly, if the set $C$ of constraints is satisfiable by a $(\suc, <)$-structure then the corresponding word is accepted by $\calA$ (the labels in (E), (L1), (E2) and (E3a) are assigned according to the solution of $C$).
  
  Now, assume that $\calA$ accepts a word $w$ and let $\rho$ be an accepting run. We argue that a $(\suc, <)$-structure $\frakA$ that satisfies $C$ can be obtained from $w$. The elements of $\frakA$ are the positions of $w$ ordered as in $w$ and the unary type of an element $v$ is its label in~$w$.
  
  We describe how to assign binary types. To this end we first assign, for every position $a$, types witnessing the existential constraints. Afterwards all so far non-typed pairs are typed by some type admissible by the universal constraints. 
  
  For each pair $(a,b)$ of nodes with $\suc(a,b)$, the edge $(a,b)$ in $T$ is typed with the type guessed in (L1). No binary type conflicts arise from this due to (L2).
  
  For non-local pairs of nodes, binary types of witnesses are assigned simultaneously for all $\sigma$- and $\tau$-labeled nodes. In the following assume that $a$ and $b$ are $\sigma$- and $\tau$-labeled nodes and that $(a,b)$ satisfies $d \df x < y \wedge \neg \suc(x,y)$,

  If there is a position $u$ as in (E1) then the witnesses can be assigned as in \cite{Otto01}. For the sake of completeness we recall the assignment strategy. Let $A$ be the first $k$ $\sigma$-labeled nodes and assume that $A = A_1 \cup A_2 \cup A_3$ with disjoint $A_i$ and $|A_i| = |\Gamma|$. Likewise let $B = B_1 \cup B_2 \cup B_3$ be the last $k$ $\tau$-labeled nodes with disjoint $B_i$ and $|B_i| = |\Gamma|$. Further assume that $\Gamma \df \{\gamma_1, \ldots, \gamma_m\}$.

  Then the witness types are assigned as follows:
  \begin{enumerate}
    \item[(A1)] The witnesses for elements $A$ and $B$ are assigned as follows:
      \begin{enumerate}
	\item If a position $a \in A_i$ is labeled by $(\sigma, d, \gamma_\ell, \tau)$ in $\rho$ then its $\gamma_\ell$-witness is the $\ell$th element $b$ of $B_i \setminus N(a)$ where $N(a)$ contains $a$ and all elements that are close to $a$.
	\item If a position $b \in B_i$ is labeled by $(\tau, d, \gamma_{\ell}, \sigma)$ in $\rho$ then its $\gamma_{\ell}$-witness is the $\ell$th element of $A_{i+1} \setminus N(b)$ where $i+1$ is calculated modulo~$3$.
      \end{enumerate}
    \item[(A2)]  The witnesses for elements not in $A$ and $B$ are assigned as follows:
      \begin{enumerate}
	\item  If a $\sigma$-labeled element $a \notin A$ is also labeled with $(\sigma, d, \gamma_\ell, \tau)$ then its $\gamma_\ell$-witness is the $\ell$-last element of $B \setminus N(a)$ (which is to the right of $a$ due to (E2)).
	\item  If a $\tau$-labeled element $a \notin B$ is also labeled with $(\tau, d, \gamma_\ell, \sigma)$ then its $\gamma_\ell$-witness is the $\ell$th element in $A \setminus N(a)$.
      \end{enumerate}
      \end{enumerate}
  
  The type assignments from (A1) ensure that all elements in $A$ and $B$ have their witnesses. The type assignments from (A2) ensure that all other elements have their witnesses. Observe that for $a \neq b$ at most one of the pairs $(a,b)$ and $(b,a)$ is assigned a type in (A1) and (A2); this determines the type of the other pair.

  If there is no position $u$ as in (E1) then the automaton guessed a position $v$ in run $\rho$. The witness types are assigned according to the labels assumed in (E3a), that is: 
  \begin{enumerate}
   \item[(A'1)] The witnesses for the at most $k+1$ $\sigma$-labeled elements to the left of $v$ and the at most $k+1$ $\tau$-labeled elements to the right of $v$ are assigned as follows:  
   \begin{itemize}
      \item If $a$ is a $\sigma$-labeled position to the left of $v$ that is labeled with $(\sigma, d, \gamma, \tau)$ and $\sigma_i$ in $\rho$ and if $b$ is the position labeled with $(\sigma_i, d, \gamma, \tau)$ in $\rho$ then $(a,b)$ is labeled $\gamma$.
      \item Likewise, if $a$ is a $\tau$-labeled position to the right of $v$ that is also labeled with $(\tau, \bar d, \gamma, \sigma)$ and $\tau_i$ in $\rho$ and if $b$ is the position labeled with $(\tau_i, d, \bar \gamma, \tau)$ in $\rho$ then $(a,b)$ is labeled~$\bar \gamma$.
   \end{itemize}
   \item[(A'2)] The witnesses for the other  elements are assigned as follows:  
   \begin{itemize}
      \item If $a$ is a $\tau$-labeled position to the left of $v$ that is labeled with $(\tau, \bar d, \gamma, \sigma_i)$ in $\rho$ then $(a,b)$ is labeled with $\gamma$ where $b$ is the position labeled with $\sigma_i$ in $\rho$.
      \item Likewise, if $a$ is a $\sigma$-labeled position to the right of $v$ that is labeled with $(\sigma, \bar d, \gamma, \tau_i)$ in $\rho$ then $(x,y)$ is labeled with $\bar \gamma$ where $b$ is the position labeled with $\tau_i$ in $\rho$.
   \end{itemize}
     \end{enumerate}
  Note that the assignments are consistent due to (E3c).
  
  Thus, so far types have been assigned such that all elements have witnesses, ensuring that the existential constraints are satisfied. All remaining so far non-typed edges are typed by some type admissible by the universal constraints. Such types exist due to condition (U).
\end{proof}
 
\section{Two Linear Orders and One Successor}\label{section:twoorders}
In this section we will show that $\esotwo$-satisfiability problem on finite $(\suc_{1},<_{1},<_{2})$-structures is decidable.
 It is known that $\esotwo$ is decidable on $<$-structures \cite{Otto01} and that $\emsotwo$ is decidable on finite $(<_{1},<_{2})$-structures.  We combine the approaches used for those two results as well as the technique introduced in Lemma \ref{lemma:succordfinitemodel} to obtain a nondeterministic double-exponential upper bound. We conjecture that this bound can be improved to exponential space by generalizing the methods from \cite{SchwentickZ12}.  

\begin{theorem}\label{thm:twoorders-decidable}
  $\esotwo$-Satisfiability on finite ordered $(\suc_1, <_1, <_2)$-structures is in $\NTWOEXPTIME$.
\end{theorem}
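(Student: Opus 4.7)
My plan is to extend the small-model proof of Lemma~\ref{lemma:succordfinitemodel} from one linear order to two, by combining Otto's witness-reassignment technique~\cite{Otto01} with the type-counting arguments developed for two linear orders in~\cite{SchwentickZ10,SchwentickZ12}. I first translate the input $\esotwo$-sentence~$\varphi$ into a constraint problem~$C$ via Lemma~\ref{lemma:formulaToConstraint}, paying a single-exponential blow-up in~$|\varphi|$. I then aim to prove a small-model property: if $C$ is solvable over ordered $(\suc_1,<_1,<_2)$-structures, it has such a solution of size singly exponential in~$|C|$. A nondeterministic algorithm can then guess a structure of size at most doubly exponential in~$|\varphi|$ and verify in the same time that it is an ordered $(\suc_1,<_1,<_2)$-structure satisfying~$C$, putting the problem in $\NTWOEXPTIME$.

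For the small-model argument I view a candidate solution as a point set in the plane, with $<_1$ the horizontal and $<_2$ the vertical order. Pairs of distinct elements fall into a constant number of direction classes, obtained by combining a horizontal option (one of the two $\suc_1$-close directions or one of the two $<_1$-remote directions) with a vertical option (above or below). The $\suc_1$-close obligations are handled exactly as in Lemma~\ref{lemma:succordfinitemodel}: each element has at most two $\suc_1$-neighbours and the induced local obligations can be satisfied verbatim. The real work is to show that the $<_1$-remote witnesses can be redirected onto a set~$Z$ of at most exponentially many \emph{boundary elements} without violating any universal constraint or contradicting the $<_2$-order. Once this is achieved, a compression argument analogous to the closing step of Lemma~\ref{lemma:succordfinitemodel} removes any long $<_1$-interval of non-boundary elements whose endpoints agree on the types of their $\suc_1$-neighbours, giving the singly exponential bound.

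The central new ingredient is a two-dimensional refinement of the rich/poor dichotomy of Lemma~\ref{lemma:richpoor}. For every triple $(\sigma,\tau,d)$ with $d$ an $<_1$-remote direction, I would distinguish whether the solution contains a position that is $(\sigma,\tau,k)$-rich in the $<_1$-sense and, simultaneously, whose $\sigma$-neighbourhoods to the left and $\tau$-neighbourhoods to the right are rich enough to cover every required $<_2$-direction. If such a rich position exists, Otto's three-copy reassignment is carried out on the leftmost and rightmost $\sigma$- and $\tau$-representatives, but each copy is further stratified according to a $<_2$-rank profile so that the vertical order between a redirected witness and its source matches what the constraint problem demands. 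If no such rich position exists, a poor position cuts the solution into $<_1$-strips of bounded $(\sigma,\tau)$-content, and the Schwentick-Zeume counting bookkeeping~\cite{SchwentickZ10,SchwentickZ12} is used to choose, strip by strip, a small set of $<_2$-extremal representatives to absorb the remote witnesses.

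The main obstacle, and the reason the bound is doubly-exponential rather than singly-exponential, is the interaction between the two orders during reassignment: a binary $T$-type between $a$ and $b$ is constrained by universal requirements that depend simultaneously on the $<_1$-relation (including $\suc_1$-adjacency) and the $<_2$-relation, so every redirection must preserve $<_2$-consistency with all other assignments. This forces the $<_2$-rank profile attached to each boundary element to record full type-counts, which is itself exponential in~$|C|$. Sharpening this bookkeeping in the spirit of~\cite{SchwentickZ12} is conjectured to yield the stronger $\EXPSPACE$ bound, but the crude combination already suffices for the claimed $\NTWOEXPTIME$ result.
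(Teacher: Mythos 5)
Your outer reduction (Lemma~\ref{lemma:formulaToConstraint}, then a solution of size exponential in $|C|$, then guess-and-check) is exactly the paper's, and compressing along the successor order is also the right instinct. The gap is in how you plan to get the exponential-size solution. Your phase one claims that all $<_1$-remote witnesses can be redirected onto a single global boundary set $Z$ of exponential size, after which $<_1$-intervals avoiding $Z$ can be excised. In two dimensions this concentration step is impossible in general: a required witness must lie in a specific \emph{quadrant} relative to its source (e.g.\ to the right \emph{and} above), and there are solutions in which no small set can serve many sources --- take $\sigma$- and $\tau$-labelled points interleaved along a descending anti-diagonal, where each $\sigma$-point has exactly one $\tau$-point in its upper-right quadrant and that point witnesses only it. Your two-dimensional rich/poor dichotomy does not rescue this: in such a configuration there is no ``doubly rich'' position, but the negation of your richness condition does not yield a poor position or $<_1$-strips of bounded $(\sigma,\tau)$-content either (there are many $\sigma$'s to the left and many $\tau$'s to the right of a middle position; they are merely vertically misaligned), so Lemma~\ref{lemma:richpoor} gives you nothing and the ``strip-by-strip absorption into $<_2$-extremal representatives'' has no elements to absorb into. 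A secondary issue: when you splice out a $<_1$-interval, the newly adjacent pair acquires whatever $<_2$-relation it happens to have, so inheriting the $T$-type of the old successor pair can clash with universal constraints unless you add matching conditions on the $<_2$-order; this is fixable, but the concentration step is not.

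The paper (Lemma~\ref{lemma:twoordfinitemodel}) avoids any global concentration. It attaches to every element $c$ a \emph{profile}: $c$, its successor, the $k{+}1$ leftmost and rightmost $\sigma$-labelled elements above and below $c$ for each $\sigma$, and the relevant witnesses of those elements on the opposite side of $c$. If two elements $c_1,c_2$ have isomorphic profiles, the slab between them (in the successor order) is deleted, the upper part is realigned along the other linear order via an embedding into $\rat\times\rat$, and only the witnesses that cross this one cut are re-routed --- by Otto's technique when the profile word has a rich position, and via the profile isomorphism in the poor case; all other pairs keep their types. The exponential bound then falls out of counting isomorphism types of (polynomial-size) profiles, not out of a small universal witness set. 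If you want to keep your interval-excision picture, you would have to make the excision condition depend on profile-like data recording, for each cut, which elements on either side still need cross-cut witnesses and where they can be found --- at which point you have essentially reconstructed the paper's argument.
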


The result immediately follows from the following small solution property and Lemma \ref{lemma:formulaToConstraint}.

\begin{lemma}\label{lemma:twoordfinitemodel}
  If a constraint problem $C$ has a finite $(\suc_1, <_1, <_2)$-solution then it has such a solution of size exponential in $|C|$.
\end{lemma}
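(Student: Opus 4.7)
The plan is to generalize the small-model argument of Lemma~\ref{lemma:succordfinitemodel} to the two-dimensional setting induced by $<_1$ and $<_2$, while reusing the fact that $\suc_1$-close witnesses can be handled purely locally. Starting from an arbitrary finite $(\suc_1,<_1,<_2)$-solution $\frakA$ of $C$, I would construct a smaller solution $\frakB$ in three phases: fix a small \emph{core} $Z$ of extreme elements, redirect every remote witness into $Z$ by an Otto-style argument, then shrink the rest by a pigeonhole/pumping step.

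\emph{Phase 1 (core set).} Let $k$ be polynomial in $|\Gamma_T|$. For each unary type $\sigma$ put into an auxiliary set the first and last $k$ $\sigma$-labeled elements with respect to $<_1$ (needed for $\suc_1$-close bookkeeping), the first and last $k$ with respect to $<_2$, and, for each of the four two-dimensional corners (northeast, northwest, southeast, southwest) defined by combining extremes of $<_1$ and $<_2$, a polynomial-size \emph{corner pad} of $\sigma$-labeled elements. Add one witness for each existential constraint of each of these elements to obtain the core $Z$, still of polynomial size in $|C|$.

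\emph{Phase 2 (witness redirection).} For each pair $(\sigma,\tau)$ and each of the four direction types $d$ describing remote-in-$<_1$ pairs (for instance $d = x<_1 y \wedge x<_2 y \wedge \neg\suc_1(x,y)$), adapt the Otto construction of Lemma~\ref{lemma:succordfinitemodel}. Distinguish whether a two-dimensional $(\sigma,\tau,k,d)$-rich position exists or, by a 2D analogue of Lemma~\ref{lemma:richpoor}, a \emph{poor} configuration arises. In the rich case, redirect every $\sigma$-element's $d$-witness into the corresponding corner pad of $Z$ using three disjoint sub-bands per corner, so that the pad elements themselves still have room to host their own redirected witnesses. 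In the poor case, all relevant witnesses already lie in $Z$ and no redirection is needed. Since no new binary $T$-types are created, $C_\forall$ is preserved; after this phase every non-local witness of every element lies in $Z$, while the local $\suc_1$-witnesses are untouched.

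\emph{Phase 3 (shrinking).} Associate to each element $a$ outside $Z$ its \emph{signature}: its unary type, its binary $T$-types with its two $\suc_1$-neighbors, and the binary $T \cup \{<_2\}$-type of $(a,z)$ for each $z \in Z$. Since $|Z|$ is polynomial in $|C|$, there are at most exponentially many signatures in $|C|$. If the current solution exceeds this bound, pigeonhole yields two $<_1$-consecutive elements $a_1 <_1 a_2$ outside $Z$ of equal signature with no element of $Z$ strictly $<_1$-between them. Deleting the $<_1$-interval $[a_1,a_2)$ and reconnecting $\suc_1$ across the cut preserves all constraints: universal constraints because no new binary type appears, non-local existential witnesses because they all point into the preserved set $Z$, local $\suc_1$-witnesses because the signatures of $a_1$ and $a_2$ coincide, and the $<_2$-order because it is simply the restriction of the original. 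Iterating yields a solution of size exponential in $|C|$, as required. The main obstacle is Phase~2: Otto's one-dimensional three-band argument must be carried out simultaneously in all four 2D corners, for every pair $(\sigma,\tau)$, without any two redirections ever conflicting on the $<_2$-position they demand; each corner pad must itself be internally partitioned into sub-bands so that its own elements can receive their own four corner-witnesses without collision. Ensuring that the combined redirection does not produce a cycle of incompatible $<_2$-constraints is the combinatorial crux, and is precisely what forces the bound in Phase~3 to be exponential rather than polynomial.
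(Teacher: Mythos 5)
Your overall strategy---fix a small global core $Z$, redirect all remote witnesses into $Z$ by Otto's trick, then pump along $<_1$---does not survive the passage to two dimensions, and the failure is exactly at your Phase~2. For a direction such as $d = x <_1 y \wedge x <_2 y \wedge \neg\suc_1(x,y)$, a redirected witness for a $\sigma$-element $a$ must lie simultaneously to the right of and above $a$. Your ``corner pads'' are supposed to supply such targets globally, but elements that are extreme in both orders at once need not exist: consider a staircase (anti-diagonal) configuration in which the $<_1$-largest elements are the $<_2$-smallest, and each $\sigma$-element has exactly one $\tau$-element up-and-to-the-right of it, namely a private, $\suc_1$-remote neighbour. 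Here no position is rich for this direction in any global sense, yet the ``poor case'' claim that all relevant witnesses already lie in a polynomial-size set $Z$ is false --- there are linearly many such private witnesses, and no small set of $\tau$-elements is north-east of more than a bounded number of $\sigma$-elements. So Phase~2 cannot be completed as stated, and consequently the deletion in Phase~3 destroys witnesses of surviving elements. (Phase~3 also has a smaller, fixable omission: when you reconnect $\suc_1$ across the cut and copy the $T$-type of $(a_1,b_1)$ onto $(a_1,b_2)$, the $<_2$-relation between $a_1$ and $b_2$ may differ from that between $a_1$ and $b_1$, so the signature must also record the order type towards the $\suc_1$-neighbours.)

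The paper avoids this by making the rich/poor analysis \emph{relative to a cut element} rather than global, and by cutting along $<_2$ instead of $<_1$. Each element $c$ gets a profile consisting of the $k{+}1$ leftmost and rightmost $\sigma$-elements below and above $c$ (for every $\sigma$), \emph{together with the actual witnesses of these special elements that cross the level of $c$}; two elements $c_1 <_2 c_2$ with isomorphic profiles allow deleting the whole $<_2$-interval $(c_1,c_2]$. Two further ingredients are essential and absent from your proposal: the surviving upper part is \emph{shifted along the $<_1$-direction} (via an embedding into $\rat\times\rat$) so that its $<_1$-relationships to the profile of $c_1$ match those it had to the profile of $c_2$; and in the poor case the crossing witnesses are re-routed through the \emph{isomorphism between the two profiles} (this is why the profiles store witnesses), while only in the rich case---richness now measured in the sequence of profile elements at the cut---is Otto's three-block reassignment used. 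Note also that the resulting bound is exponential simply because a profile is an induced substructure on polynomially many elements, so there are exponentially many isomorphism types; it is not forced by any corner-collision bookkeeping of the kind you describe.
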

\begin{proof}
  We follow the proof outline employed for $\emsotwo$ in \cite{SchwentickZ10}; yet for dealing with binary relations the individual steps have to be generalized. For consistency with the proofs in  \cite{SchwentickZ10} we prove the statement of the lemma for  $(<_1, \suc_2,<_2)$-constraint problems. For the description below, recall that structures with two linear orders can be viewed as point sets in the plane (cf. Section~\ref{sec:ordered-structures}).

  In order to establish the exponential solution property for \mbox{$(<_1, \suc_2,<_2)$}-constraint problems, we show that smaller solutions can be constructed from large solutions. To this end we assign a \emph{profile} $Pro(c)$ to each element $c$ of a solution $\frakA$ in such a way that if the profiles $Pro(c_1)$ and $Pro(c_2)$ of two elements $c_1$ and $c_2$ of $\frakA$ with $c_1 <^\frakA_2 c_2$ coincide then a solution $\frakB$ with fewer elements can be constructed from $\frakA$ by deleting all elements $a$ with $c_1 <^\frakA_2 a \leq^\frakA_2 c_2$ and shifting the elements $b$ with $c_2 <^\frakA_2 b$ along the $<_1^{\frakA}$-dimension. Deleting elements might, of course, also eliminate some witnesses of remaining elements. The profiles of elements will be defined such that new witnesses can be assigned.

  We make this more precise now. Assume that $C$ is a \mbox{$(<_1, \suc_2,<_2)$}-constraint problem with signature $T \cup \{<_1, \suc_2, <_2\}$ and let $\Sigma \df \Sigma_{T}$ and $\Gamma \df \Gamma_{T}$. Let $k \df 3 |\Gamma|$. Further  let $\frakA$ be a solution of~$C$ with domain $D$. For every element $a$ of $\frakA$, we also fix a set $W(a)$ of elements witnessing that the existential constraints of $C$ are satisfied for $a$.
  
  We extend the notion of profiles introduced in \cite{SchwentickZ10} to structures with arbitrary binary relations. The profile of an element $c$ shall capture all relevant information about elements below $c$ that have witnesses above $c$, and vice versa. Storing information for all such witnesses in the profiles is not possible, yet it turns out that less information is sufficient for being able to construct smaller solutions.
  
  For defining the profile of $c$, we first specify a set $P(c)$ of elements that are important for~$c$. Then, roughly speaking, the profile will be defined as the substructure of $\frakA$ induced by~$P(c)$. The set $P(c)$ contains $c$, its $<_2^{\frakA}$-successor, a set of special elements $A(c)$ as well as some of the witnesses $A_W(c)$ of elements in $A(c)$. The last two sets will be defined next.
    The set $A(c)$ is the union of the sets  $A_{\sigma, \min, \downarrow}(c)$, $A_{\sigma, \max, \downarrow}(c)$, $A_{\sigma, \min, \uparrow}(c)$, and $A_{\sigma, \max, \uparrow}(c)$, for all $\sigma \in \Sigma$, defined as follows:
  \begin{enumerate}
   \item The set $A_{\sigma,\min, \downarrow}(c)$ contains the $k+1$ leftmost $\sigma$-labeled points that are below $c$ (if they exist). 
   \item The set  $A_{\sigma,\min, \uparrow}(c)$ contains the $k+1$ leftmost $\sigma$-labeled points that are above~$c$ (if they exist), but excluding the $<_2^{\frakA}$-successor of $c$.  
   \item The set $A_{\sigma,\max, \downarrow}(c)$ contains the $k+1$ rightmost $\sigma$-labeled points that are below $c$ (if they exist). 
   \item The set  $A_{\sigma,\max, \uparrow}(c)$ contains the $k+1$ rightmost $\sigma$-labeled points that are above~$c$ (if they exist), but excluding the $<_2^{\frakA}$-successor of $c$.  
  \end{enumerate}

  Intuitively $A_W(c)$ contains the relevant witnesses of elements in $A(c)$.  It is the union of the sets $W^\tau_{\sigma,\min, \downarrow}(c)$, $W^\tau_{\sigma,\min, \uparrow}(c)$,  $W^\tau_{\sigma,\max, \downarrow}(c)$, and $W^\tau_{\sigma,\max, \uparrow}(c)$, for all unary types $\sigma, \tau \in \Sigma$, defined as follows:
  \begin{enumerate}
    \item The set $W^\tau_{\sigma,\min, \downarrow}(c)$ contains, for all $a \in A_{\sigma, \min, \downarrow}(c)$, all witnesses $b \in W(a)$ that are to the right of $a$ and above $c$.
    \item The set $W^\tau_{\sigma,\min, \uparrow}(c)$ contains, for all $a \in A_{\sigma, \min, \uparrow}(c)$, all witnesses $b \in W(a)$ that are to the right of $a$ and below $c$.
    \item The set $W^\tau_{\sigma,\max, \downarrow}(c)$ contains, for all $a \in A_{\sigma, \min, \downarrow}(c)$, all witnesses $b \in W(a)$ that are to the left of $a$ and above $c$.
    \item The set $W^\tau_{\sigma,\max, \uparrow}(c)$ contains, for all $a \in A_{\sigma, \min, \uparrow}(c)$, all witnesses $b \in W(a)$ that are to the left of $a$ and below $c$.
  \end{enumerate}

  The set of important points for $c$ is $P(c) \df \{c, s(c)\} \cup A(c) \cup A_W(c)$ where $s(c)$ is the unique element satisfying $\suc_2^{\frakA}(c, s(c))$. The profile $Pro(c)$ of $c$ is the structure $(\calP(c), c)$ where $\calP(c)$ is the substructure of $\frakA$ induced by $P(c)$. Observe that relation $\suc_2^{\calP(c)}$ is not necessarily a successor relation.   
  Now we show that if two profiles $Pro(c_1)$ and $Pro(c_2)$ of elements $c_1\neq c_2$ are isomorphic then a solution $\frakB$ with fewer elements can be constructed from $\frakA$. The domain of $\frakB$ is the set $D' \subseteq D$ which contains all elements $a$ with $a \leq^\frakA_2 c_1$ or $c_2 <^\frakA_2 a$  (assuming without loss of generality that~$c_1 <^\frakA_2 c_2$). We now describe how the relations $<^\frakB_1$, $<^\frakB_2$ and $\suc^\frakB_2$ as well as the interpretations of symbols from $T$ are constructed for $\frakB$.

  Towards defining $<^\frakB_1$, $<^\frakB_2$ and $\suc^\frakB_2$ in $\frakB$, we fix an embedding~$\theta$ that maps every element $u$ of~$\frakA$ to a point~$\theta(u) \in  \rat \times \rat$ such that $\theta(u_1).\mathsf{x} < \theta(u_2).\mathsf{x}$ if and only if $u_1 <^\frakA_1 u_2$, and $\theta(u_1).\mathsf{y} < \theta(u_2).\mathsf{y}$ if and only if $u_1 <^\frakA_2 u_2$. Here $<$ is the usual linear order on the rational numbers, and $p.\mathsf{x}$ and $p.\mathsf{y}$ denote the $\mathsf{x}$- and $\mathsf{y}$-component of a point $p \in \rat \times \rat$. 

  From $\theta$ we define an embedding $\theta'$ of the elements of $\frakB$ into~$\rat \times \rat$ which will be used to obtain the order relations on~$\frakB$. Intuitively $\theta'$ keeps the positions of elements below $c_1$ but shifts elements above $c_2$ along the $\mathsf{x}$-direction in order to make those points consistent with the profile of $c_1$. The embedding is defined as follows:
  \begin{enumerate}
    \item[(P1)] For all elements $u \in D'$ with $u \leq^\frakA_2 c_1$, define $\theta'(u) \df \theta(u)$.
    \item[(P2)] For all elements $u\in D'$ with $c_2 <^\frakA_2 u$ the embedding $\theta'(u)$ is defined as follows. Assume that $c_1^1, \ldots, c_1^n$ are the elements of $P(c_1)$ ordered by $<^\frakA_1$ and  $c_2^1, \ldots, c_2^n$ are the elements of $P(c_2)$ ordered by $<^\frakA_1$. Observe that $c_2^1 \leq u \leq c_2^n$ by the definition of~$P(c_2)$. Assume that $\theta(c_2^i).\mathsf{x} \leq \theta(u).\mathsf{x} \leq \theta(c_2^{i+1}).\mathsf{x}$. Then the position of $u$ in the new orders is obtained by shifting the element $u$ such that its $\mathsf{x}$-coordinate is between the $\mathsf{x}$-coordinates of the elements $c_1^i$ and $c_1^{i+1}$. More precisely $\theta'(u).\mathsf{y} \df \theta(u).\mathsf{y}$ and $\theta'(u).\mathsf{x}$ is defined as follows. If $u = c_2^i$ for some $i$ then $\theta'(u).\mathsf{x} = \theta(c_1^i).\mathsf{x}$. Otherwise,  
      $$\theta(c_1^i).\mathsf{x} + \frac{\theta(u).\mathsf{x}-\theta(c_2^i).\mathsf{x}}{\theta(c_2^{i+1}).\mathsf{x}-\theta(c_2^{i}).\mathsf{x})} (\theta(c_1^{i+1}).\mathsf{x} - \theta(c_1^{i}).\mathsf{x})$$  
    
  \end{enumerate}
  
  It might happen, that the embedding $\theta'$ maps an element $u$ to a point in $\rat \times \rat$ whose $\mathsf{x}$-coordinate is already used by another element $v$. This can be dealt with in the same way as in~\mbox{\cite[Lemma 8]{SchwentickZ10}}. The $\mathsf{x}$-coordinates are assigned sequentially: first the ones for (P1), then the remaining elements of $c_2^1, \ldots, c_2^n$, and finally all remaining elements for (P2). No conflicts can arise for the first two cases. If the designated $\mathsf{x}$-coordinate of an element $u$ assigned in the third case is already used by an element $v$ then the $\mathsf{x}$-coordinate of $u$ is shifted by a very small distance. Namely, if  $w$ is the element whose $\mathsf{x}$-coordinate is the largest with $\theta'(w).\mathsf{x} < \theta'(v).\mathsf{x}$ assigned so far then the new $\mathsf{x}$-coordinate of $u$ is shifted in between $\theta'(w).\mathsf{x}$ and $\theta'(v).\mathsf{x}$. Note that such a $w$ always exists due to the first two cases.  

  In $\frakB$, the symbol $<_1$ is interpreted by the linear order $<^{\frakB}_1$ such that $a <^{\frakB}_{1} b$ if and only if $\theta'(a).x < \theta'(b).\mathsf{x}$. The interpretation of $\suc^{\frakB}_{1}$ is induced by $<^{\frakB}_1$. The interpretations of $<_2$ and $\suc_2$ are defined accordingly, but with respect to the $\mathsf{y}$-coordinates. Observe that $<^\frakB_2$ is induced by $<^\frakA_2$ and that for all tuples $(a,b)$ with either $a \leq^\frakA_2 c_1$ and $b \leq^\frakA_2 c_1$, or  $c_1 <^\frakA_2 a$ and $c_1 <^\frakA_2 b$, the relative order of the elements in $\frakB$ is the same.
    
  The unary type of an element $a$ in $\frakB$ is the same as in $\frakA$. It remains to assign binary $T$-types for tuples of elements of $\frakB$.
  The binary type of tuples $(a,b)$ with either $a \leq^\frakB_2 c_1$ and $b \leq^\frakB_2 c_1$ or $c_1 <^\frakB_2 a$ and $c_1 <^\frakB_2 b$ is inherited  from $\frakA$. The tuple $(c_1, b)$ with $(c_1, b)\in\suc_{2}^{\frakB}$ inherits the type of the tuple $(c_1,u)$ with $(c_{1},u) \in \suc_{2}^{\frakA}$ in $\frakA$.
  
  We now explain how the binary types of all tuples $(a, b)$ with $a \leq^\frakB_2 c_1 <^\frakB_2 b$ and $(a,b)\notin \suc^\frakB_2$ are assigned. Our focus is on the case when $a <^\frakB_1 b$; the case $b <^\frakB_1 a$ is symmetric. In the following let $d \df x <_1 y \wedge x <_2 y \wedge \neg \suc_2(x,y)$.
        We simultaneously assign all binary $T$-types to all $\sigma$-labeled $a$ and $\tau$-labeled $b$ such that $(a,b)$ satisfies~$d$.

  Depending on the structure of the points in $A(c_1)$ we distinguish two cases. To this end let $a_1, \ldots, a_n$ be the elements of $A(c_1)$ ordered by $<^\frakA_1$. Denote by $w(c_1)$ the sequence $(\sigma_1, d_1), \ldots, (\sigma_n, d_n)$ where $\sigma_i$ is the unary type of $a_i$ and $d_i = \uparrow$ if $c_1 <^\frakA_2 a_i$, $d_i = \suc$ if $\suc(c_1, a_i)$, $d_i = \cdot$ if $c_1 = a_i$ and $d_i = \downarrow$ if~$a_i <^\frakA_2 c_1$. 
  
  If there is a $((\sigma, \downarrow), (\tau, \uparrow), k)$-rich position $u$ in $w(c_1)$ then we assign the binary $T$-types of all tuples $(a,b)$ satisfying $d$ in $\frakB$ using the technique employed by Otto and also used in Lemma \ref{lemma:succordfinitemodel}. Let \mbox{$A = A_1 \cup A_2 \cup A_3$} with disjoint $A_i$ and $|A_i| = |\Gamma|$ be the set that contains the first $k$ elements of $A_{\sigma, \min, \downarrow}(c_1)$ in $\frakB$. Similarly let $B = B_1 \cup B_2 \cup B_3$ with disjoint $B_i$ and $|B_i| = |\Gamma|$ contain the last $k$ elements of $A_{\tau, \max, \uparrow}(c_2)$ in $\frakB$. Then the binary types are assigned as in Lemma~\ref{lemma:succordfinitemodel}: 

  \begin{enumerate}
      \item[(A1)] Witnesses for elements in $A \cup B$ are assigned as follows:
        \begin{enumerate}
          \item If, in $\frakA$, an element $a \in A_i$ is $\sigma$-labeled and has a $(d, \gamma_\ell, \tau)$-witness $b \in W(a)$ then the binary $T$-type of $(a, b)$ in $\frakB$ is $\gamma_\ell$ where $b$ is the $\ell$th element of $B_i$. The element $b$ is the $(d, \gamma_\ell, \tau)$-witness of $a$ in $\frakB$.
          \item If, in $\frakA$, an element $b \in B_i$ is $\tau$-labeled and has $(\bar d, \gamma_\ell, \sigma)$-witness $a \in W(b)$ then the binary $T$-type of $(b, a)$ in $\frakB$ is $\gamma_\ell$ where $a$ is the $\ell$th element of $A_{i+1}$ (where $i+1$ is calculated modulo 3). The element $a$ is the $(d, \gamma_\ell, \tau)$-witness of $b$ in $\frakB$.
        \end{enumerate}
      \item[(A2)] Witnesses for all other tuples of $\sigma$- and $\tau$-labeled elements are assigned as follows:
        \begin{enumerate}
          \item If, in $\frakA$, an element $b \notin B$ is $\sigma$-labeled and has a $(d, \gamma_\ell, \tau)$-witness $b \in W(a)$ then the binary $T$-type of $(a, b)$ in $\frakB$ is $\gamma_\ell$ where $b$ is the $\ell$th element of $B_1$. The element $b$ is the $(d, \gamma_\ell, \tau)$-witness of $a$ in $\frakB$.
          \item If, in $\frakA$, an element $b \notin B$ is a $\tau$-labeled and has a $(\bar d, \gamma_\ell, \sigma)$-witness $a \in W(b)$ then the binary type of $(b, a)$ in $\frakB$ is $\gamma_\ell$ where $a$ is the $\ell$th element of $A_{1}$. The element $a$ is the $(d, \gamma_\ell, \tau)$-witness of $b$ in $\frakB$.
        \end{enumerate}
      \item[(A3)] If a tuple $(a,b)\in D' \times D'$ such that, in $\frakA$, 
            the element $a$ is $\sigma$-labeled, $b$ is $\tau$-labeled and $(a,b)$ satisfies $d$, and if $(a,b)$ has not been assigned a binary $T$-type so far, then a type is assigned as follows. Since $Pro(c_1) \cong Pro(c_2)$, there is a $\tau$-labeled element $b'\in D$ such that $(a,b')$ satisfies $d$ in $\frakA$. The tuple $(a, b)$ inherits its binary type from $(a,b')$ in $\frakB$.
    \end{enumerate}
    
    This concludes the binary type assignments in the case when there is a $((\sigma, \downarrow), (\tau, \uparrow), k)$-rich position $u$ in $w(c)$. Note that this case is settled without using the fact that the profiles contain the elements added by~$A_W$.

    If there is no $((\sigma, \downarrow), (\tau, \uparrow), k)$-rich position $u$ in $w(c_1)$ then we assign the binary $T$-types to $\sigma$-labeled  $a$ and $\tau$-labeled $b$ satisfying $d$ as explained below. 
    
    We observe that in $\frakA$ the $(d, \cdot, \tau)$-witnesses for all $\sigma$-labeled elements $a <^\frakA_2 c_1$ are in  $W^\tau_{\sigma, \min, \downarrow}(c_1) \cup A_{\tau, \max, \uparrow}(c_1)$. To see this we argue as in Lemma \ref{lemma:succordfinitemodel}. By Lemma \ref{lemma:richpoor} there is a \mbox{$((\sigma, \downarrow), (\tau, \uparrow), k+1)$}-poor position in $w(c_1)$. Let $v$ be the minimal such position.  Now let $a <^\frakA_2 c_1$ be a $\sigma$-labeled  position. If $a \leq^\frakA_1 v$ then all $(d, \cdot, \tau)$-witnesses $b \in W(a)$ are contained in $W^\tau_{\sigma, \min, \downarrow}(c_1)$ by construction (as $a$ is one of the $k+1$ $<^\frakA_1$-smallest $\sigma$-labeled elements below $c_1$). If $a >^\frakA_1 v$ then all $(d, \cdot, \tau)$-witnesses of $a$ are among the elements $A_{\tau, \max, \uparrow}(c_1)$ (as there are only at most $k+1$ $\tau$-labeled elements above $c_1$ and to the right of $a$). Similarly all $(\bar d, \cdot, \sigma)$-witnesses for all $\tau$-labeled elements $b >^\frakA_2 c_2$ are in~\mbox{$A_{\sigma, \min, \downarrow}(c_2) \cup W_{\tau, \max, \uparrow}(c_2)$}. 
    
     Let $\pi$ be an isomorphism of $(\calP(c_1), c_1)$ and $(\calP(c_2), c_2)$. We use the above observation to assign binary $T$-types as follows:

      \begin{enumerate}
        \item[(B1)] Witnesses for $\sigma$-labeled $a$ with $a <^\frakB_2 c_1$ are assigned as follows. If $b \in W(a)$ is a $(d, \gamma, \tau)$-witness of $a$ in $\frakA$ then $\pi(b)$ is the $(d, \gamma, \tau)$-witness of $a$ in $\frakB$. That is, the binary type of $(a, \pi(b))$ in $\frakB$ is $\gamma$.        \item[(B2)] Witnesses for $\tau$-labeled $b$ with $b >^\frakB_2 c_2 $ are assigned as follows. If $a \in W(b)$ is a $(\bar d, \gamma, \sigma)$-witness of $b$ in $\frakA$ then $\pi(a)$ is the $(d, \gamma, \tau)$-witness of $b$ in $\frakB$. That is, the binary type of $(b, \pi(a))$ in $\frakB$ is $\gamma$.         \item[(B3)] If $(a,b) \in D' \times D'$ has not been assigned a binary $T$-type so far, then a type is assigned as follows. Since $Pro(c_1) \cong Pro(c_2)$, there is a $\tau$-labeled element $b'\in D$ such that $(a,b')$ satisfies $d$ in $\frakA$. The tuple $(a, b)$ inherits its binary type in $\frakB$ from $(a,b)$.
      \end{enumerate}
      
    This concludes the binary type assignments in the case when there is no $((\sigma, \downarrow), (\tau, \uparrow), k)$-rich position $u$ in $w(c)$.
    
    We shortly argue why the construction is correct. The assignments in (A1) and (A2) as well as (B1) and (B2) ensure that no conflicting types are assigned to $(a, b)$ and $(b, a)$. No conflicts with universal constraints arise by the assignments from (A1)-(A3) and (B1)-(B3), as no new types are introduced due to the choice of $\theta'$. Finally, after the assignments (A1)-(A2) and (B1)-(B2), each element $a$ has a $(d, \gamma, \tau)$-witness in $\frakB$ if it has a $(d, \gamma, \tau)$-witness in $\frakA$. Similarly for $(\bar d, \gamma, \sigma)$-witnesses.

\end{proof}

\section{Two Successors and One Linear Order}\label{section:twosuccessors}
In this section we show that $\esotwo$ is decidable on finite \mbox{$(\suc_1, \suc_2, <_2)$}-structures. This has only been known for $\emsotwo$ so far  \cite{ManuelZ13}.

\begin{theorem}\label{theorem:twosuccessors:decidability}
  $\esotwo$-satisfiability on finite ordered $(\suc_1, \suc_2, <_2)$-structures is decidable.
\end{theorem}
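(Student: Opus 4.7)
By Lemma~\ref{lemma:formulaToConstraint} it suffices to decide, for a given constraint problem $C$ over the signature $T \cup \{\suc_1, \suc_2, <_2\}$, whether $C$ has a finite $(\suc_1, \suc_2, <_2)$-solution. The plan is to reduce this question to the emptiness problem for multicounter automata (equivalently, to reachability in vector addition systems with states), which is decidable. We construct a multicounter automaton $\calM$ over an extended alphabet. Each letter describes a position of the candidate solution scanned in $<_2$-order; it carries its unary $T$-type together with the same kind of rich label information as in the proof of Lemma~\ref{lemma:succorder}: the binary $T$-types of the two $\suc_2$-adjacent pairs at this position, and the $\sigma$/$\tau$-style labels of clauses~(E) and~(E3) needed to verify $<_2$-remote existential witnesses. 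In addition, each letter carries new $\suc_1$-labels describing the unary type of the $\suc_1$-predecessor and $\suc_1$-successor, the binary $T$-types intended for the incoming and outgoing $\suc_1$-edges, and a flag indicating whether the current position is the $\suc_1$-minimum, the $\suc_1$-maximum, both, or neither.

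The finite-state component of $\calM$ re-uses all the checks from Lemma~\ref{lemma:succorder} applied to the linear order $<_2$: consistency of the $\suc_2$-type guesses (as in (L1)--(L4)), existence of $\suc_2$-local existential witnesses, admissibility of some type for every $<_2$-remote pair under the universal constraints (as in (U)), and the rich/poor guess of (E1)--(E3) for $<_2$-remote existential witnesses. To verify the guessed $\suc_1$-edges, $\calM$ uses one counter for each tuple $(\sigma, \tau, \gamma) \in \Sigma_T \times \Sigma_T \times \Gamma_T$ combined with a direction flag: the counter for ``$\suc_1$-successor still to come'' of type $(\sigma,\tau,\gamma)$ is incremented when a $\sigma$-labeled position is read that promises an outgoing $\suc_1$-edge to a later-seen $\tau$-labeled position with binary type $\gamma$, and decremented when the matching incoming promise is read. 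A symmetric bank of counters handles successors that are seen before their predecessors. At the end of the run all such counters must be zero, which ensures that the guessed $\suc_1$-edges can be realised by a functional, type-consistent binary relation.

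The main obstacle is to guarantee that the realisation of the guessed $\suc_1$-edges is actually the successor relation of some linear order on the whole domain; equivalently, that the induced graph is a single Hamiltonian path and not a disjoint union of paths and cycles. Functionality in both directions, uniqueness of the $\suc_1$-minimum and $\suc_1$-maximum, and the global in/out-degree counts are straightforward finite-state checks on the letters. Acyclicity and single-componentness are the delicate point. I would handle this along the lines of the monadic argument of \cite{ManuelZ13}: extend the alphabet with a bounded-range ``phase'' label which partitions the $\suc_1$-chain into contiguous blocks, and add counters that, for each block, count the unique outgoing $\suc_1$-edge that leaves it and the unique incoming $\suc_1$-edge that enters the next block, forcing the blocks to concatenate into one chain. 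The role of the counters is thus to thread the blocks together in the right order without the automaton having to remember unboundedly many partial path fragments. The extension from the monadic case to $\esotwo$ forces us to propagate not only unary but also binary $T$-types across the phase boundaries, which is why one counter per triple $(\sigma,\tau,\gamma)$ is needed instead of per pair $(\sigma,\tau)$.

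Finally, existential constraints that demand a $\suc_1$-close witness are verified directly from the $\suc_1$-labels carried by each letter, and no further mechanism is needed. Putting everything together, $L(\calM) \neq \emptyset$ if and only if $C$ has a finite $(\suc_1,\suc_2,<_2)$-solution, and emptiness of multicounter automata is decidable, which proves the theorem. The hard part will be Step~3, the correct handling of acyclicity and connectedness of the guessed $\suc_1$-graph in the presence of quantified binary relations, since all earlier steps are direct generalisations of the techniques already established in Lemma~\ref{lemma:succorder} and in \cite{ManuelZ13}.
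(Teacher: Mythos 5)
There is a genuine gap, and it sits exactly where you place it yourself: the verification that the guessed $\suc_1$-edges form the successor relation of a linear order on the whole domain. Your counter bank (one counter per $(\sigma,\tau,\gamma)$ and direction, zero at the end) only equates the \emph{number} of outgoing promises with the number of incoming promises of each type; it cannot control which promise is matched with which, and balanced counts do not guarantee that \emph{some} matching yields a single Hamiltonian path rather than a path plus disjoint cycles (two elements mutually promising edges of matching types can only close a $2$-cycle, yet all counters balance). The sketched repair --- a bounded-range ``phase'' label cutting the $\suc_1$-chain into blocks threaded together by counters --- is not substantiated: within a block the same acyclicity/connectedness problem recurs, and nothing in the proposal bounds the interleaving of the $\suc_1$-chain with the $<_2$-scan in a way a finite phase label could exploit. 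Moreover, attributing this mechanism to the monadic argument of \cite{ManuelZ13} mischaracterises that work: the automata there never guess the successor; they obtain it from the word structure itself.

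This is also precisely how the paper avoids the problem. Instead of building a multicounter automaton directly, it encodes candidate solutions as \emph{linearly ordered data words}, in which $\suc_1$ is the positional successor of the word (hence never guessed) and $<_2,\suc_2$ are given by the data values, and it translates the constraint problem (obtained via Lemma~\ref{lemma:formulaToConstraint}) into a linearly ordered data automaton $\calA=(\calB,\calC)$: the transducer $\calB$ checks all $\suc_1$-local conditions on the marked string projection, while $\calC$ performs, on the resorting, essentially the checks of Lemma~\ref{lemma:succorder} for $<_2$-local and $<_2$-remote pairs (Lemma~\ref{lemma:twosucc}). Decidability then follows from the known emptiness result for this automaton model, Theorem~\ref{theorem:loda:emptiness}, whose reduction to multicounter automata already encapsulates the combinatorial work your Step~3 would have to redo from scratch. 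If you want to salvage your single-pass construction, you would in effect have to reprove that emptiness result; as written, your argument does not establish the theorem.
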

More precisely it is decidable as fast as emptiness for multicounter automata. 

The theorem is proved using the automata-based approach along the same lines as in Lemma~\ref{lemma:succorder}. To this end we first define \emph{linearly ordered data automata} (short: LODA), a restriction of \emph{ordered data automata} which were introduced in \cite{ManuelZ13}. Then we show that each constraint problem can be translated into a LODA such that the constraint problem has a finite $(\suc_1, \suc_2, <_2)$-solution if and only if the automaton accepts some linearly ordered data word.
Theorem~\ref{theorem:twosuccessors:decidability} then follows from Lemma \ref{lemma:formulaToConstraint} and the decidability of the emptiness problem for LODA (see Theorem \ref{theorem:loda:emptiness}).

Ordered data automata have been introduced for studying
$\emsotwo$ on $(\suc_1, \suc_2, \prec_2)$-structures where $\prec_2$ is a preorder relation. For convenience we simplify the automaton model in order to study $\esotwo$ on plain $(\suc_1, \suc_2, <_2)$-structures.

A \emph{linearly ordered data word} is a word $w = (\sigma_1, d_1) \ldots (\sigma_n, d_n)$ with $(\sigma_i, d_i)$ from $\Sigma \times \N$ such that $\{d_1, \ldots, d_n\}$ is a contiguous interval in $\N$ and $d_i \neq d_j$ for all $i \neq j$. Each linearly ordered data word represents a $(\suc_1, <_1, \suc_2, <_2)$-structures extended by unary relations interpreting symbols from a signature $T$ in a canonical way. The linear order $<_1$ and its successor are represented by the positional order, while $<_2$ and $\suc_2$ are encoded by the order of the data values $d_1, \ldots, d_n$. The unary symbols are encoded by the unary types $\Sigma$ over $T$. See Figure \ref{figure:example:PointSetDataWord} for an illustration. 

  \begin{figure}[t]
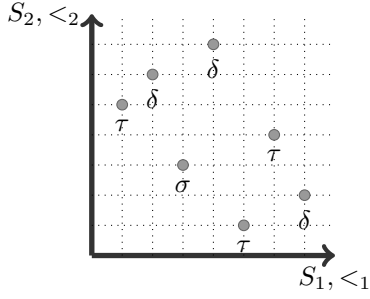
 
     \begin{center}
     \pictureExamplePointSetDataWord 
     \end{center}
    \caption{A $(\suc_1, <_1, \suc_2, <_2)$-structure represented as a point set in the two-dimensional plane. Here all other binary relations are disregarded. A linearly ordered data word corresponding to the structure is $((\tau, 5), (\delta, 6), (\sigma, 3), (\delta, 7), (\tau, 1), (\tau, 4), (\delta, 2))$.  \label{figure:example:PointSetDataWord}}
  \end{figure}

The \emph{resorting} of $w$ is the string $\sigma_{i_1} \ldots \sigma_{i_n}$ such that the data values $d_{i_1}, \ldots, d_{i_n}$ are sorted in ascending order.
The \emph{string projection} of $w$ is the string $\sigma_1 \ldots \sigma_n$. The \emph{marked string projection} of $w$ is its string projection annotated by information about the relationship of data values of adjacent positions. We make this more precise. The \emph{marking} $m_i = (m, m')$ of position $i$ is a tuple from $\Sigma_M = \{ -\infty, -1, +1, \infty, -\}^2$  and is defined as follows. If $i = 1$ (or $i = n$) then $m = -$ (or $m' = -$), otherwise: 

$$m =
                  \begin{cases}
                  -\infty & \mbox{ if } d_{i-1} < d_i - 1 \\
                  -1 & \mbox{ if }  d_{i-1} = d_i - 1  \\
                  +1 & \mbox{ if } d_{i-1} = d_i + 1 \\
                  \infty & \mbox{ if } d_{i-1} > d_i + 1
                   
                  \end{cases}
                  $$

Similarly $m'$ is defined using $d_{i+1}$ instead of $d_{i-1}$. The marked string projection of $w$ is the string $(\sigma_1, m_1)\ldots$ $(\sigma_n, m_n)$ over the alphabet~\mbox{$\Sigma \times \Sigma_M$}.The string projection of the linearly ordered data word $(\tau, 5)$ $(\delta, 6)$ $(\sigma, 3)$ $(\delta, 7)$ $(\tau, 1)$ $(\tau, 4)$ $(\delta, 2)$ from Figure \ref{figure:example:PointSetDataWord} is $\tau\delta\sigma\delta\tau\tau\delta$. Its marked string projection is $(\tau, (-, +1))$ $(\delta, (-1, -\infty))$ $(\sigma, (+\infty, +\infty))$ $(\delta, (-\infty, -\infty))$ $(\tau, (+\infty, +\infty))$ $(\tau, (-\infty, -\infty))$ \\$(\delta, (+\infty, -))$, and its resorting is $\tau\delta\sigma\tau\tau\delta\delta$.

A \emph{linearly ordered data automaton (short: LODA)} $\calA$ over alphabet $\Sigma$ is a tuple $(\calB, \calC)$ where $\calB$ is a (non-deterministic) finite state string transducer with input alphabet $\Sigma \times \Sigma_M$ (where $\Sigma_M$ is the set of markings) and some output alphabet $\Sigma'$; and $\calC$ is a finite state automaton over $\Sigma'$.

A LODA $\calA = (\calB,\calC)$ works as follows. First, for a given linearly ordered data word $w$, the transducer $\calB$ reads the marked string projection of $w$. A run $\rho_B$ of the transducer defines a unique new labeling of each position. Let $w'$ be the linearly ordered data word thus obtained from $w$. The finite state automaton $\calC$ runs over the resorting of $w'$ yielding a run $\rho_C$. The run $\rho_\calA = (\rho_B, \rho_C)$ of $\calA$ is accepting, if both $\rho_B$ and $\rho_C$ are accepting. The automaton $\calA$ accepts $w$ if there is an accepting run of $\calA$ on $w$. The set of linearly ordered data words accepted by $\calA$ is denoted by $\calL(\calA)$. We refer to~\cite{ManuelZ13} for more details.

\begin{example}\label{example:LODA}
  Consider the data language $L$ over $\Sigma = \{\sigma, \tau, \rho\}$ that contains all linearly ordered data words with a unique $\sigma$-position~$x$ with a $\tau$-position $y$ to its right such that the data value of $y$ is the successor of the data value of $x$ (see Figure \ref{figure:example:PointSetDataWord} for a linearly ordered data word in $L$). A LODA $\calA = (\calB, \calC)$ can recognize $L$ as follows. For a linearly ordered data word, the automaton $\calB$ checks that there is a unique $\sigma$-position $x$, guesses the $\tau$-position $y$ to the right of $x$ and colors this position $y$ with a fresh label $\tau'$ (using the transduction). The automaton $\calC$ checks that the $\tau'$-position is the data successor of the $\sigma$-position.
\end{example}

The following theorem follows from  Theorem 7 and Corollary 11 in~\cite{ManuelZ13}.
\begin{theorem}\label{theorem:loda:emptiness}
  Emptiness of LODA is decidable.
\end{theorem}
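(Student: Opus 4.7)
My plan is to derive the decidability of LODA emptiness from the decidability of emptiness for the more general \emph{ordered data automata} introduced in~\cite{ManuelZ13}, which is what the cited Theorem~7 and Corollary~11 of that paper establish (via a reduction to the emptiness problem for multicounter automata). The strategy is to show that a LODA is a syntactic restriction of an ordered data automaton, so no new algorithmic work is required beyond the translation.

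First, I would recall the framework of ordered data automata. Such an automaton processes data words whose data values are equipped with a (pre)order; it consists of a non-deterministic string transducer reading an enriched string projection of the input and a finite state automaton reading a \emph{class string} that lists the transduced letters in data-value order. In the LODA setting the pre-order is actually a strict linear order on a contiguous interval of $\N$, so the class string coincides with the \emph{resorting} introduced above. The enriched alphabet of marked letters used by $\calB$ in a LODA is a special case of the one used by the transducer of an ordered data automaton: the only markings allowed are $\{-\infty, -1, +1, +\infty, -\}$, which reflects exactly the strict ordering assumption.

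Second, given a LODA $\calA = (\calB, \calC)$, I would construct an ordered data automaton $\calA'$ that simulates $\calA$ and additionally verifies, using the state space of its transducer, that (i) consecutive marking pairs $(m'_i, m_{i+1})$ are consistent, (ii) no position has a marking signalling equality of data values (ruling out non-strict behavior in the underlying preorder), and (iii) the set of data values is contiguous. Items (i) and (ii) are purely local and can be enforced by $\calB$ (or a product with $\calB$). Item (iii) requires slightly more care: it suffices to ensure that the minimum and maximum data values occur once each (these correspond to the first and last positions of the class string) and that every intermediate value has both a data-predecessor and data-successor in the word; this can be checked by $\calC$ reading the resorting, since the contiguity condition becomes a local property in the resorted sequence. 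Thus $\calL(\calA)$ is non-empty if and only if $\calL(\calA')$ is non-empty.

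Third, I would invoke Theorem~7 and Corollary~11 of~\cite{ManuelZ13} to conclude that emptiness of $\calA'$ is decidable (indeed, reducible to multicounter automaton emptiness), and hence so is emptiness of $\calA$. The main obstacle I anticipate is the bookkeeping in the second step: one must argue carefully that the simplifications built into LODA (strict linear order, contiguous interval of data values) fit cleanly inside the more permissive ordered data automaton model without expanding the expressive power in a way that disturbs the reduction of~\cite{ManuelZ13}. Once this encoding is spelled out, the decidability transfer and the complexity bound (emptiness of multicounter automata) follow immediately.
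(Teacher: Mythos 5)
Your proposal follows essentially the same route as the paper, which proves the theorem simply by citing Theorem~7 and Corollary~11 of \cite{ManuelZ13}, implicitly treating a LODA as a restriction of the ordered data automata of that paper (where the preorder happens to be a strict linear order). Your elaboration of the embedding (consistency of markings, excluding equal data values, contiguity checked on the resorting) is a sound filling-in of exactly the step the paper leaves implicit, and it yields the same multicounter-emptiness bound.
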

More precisely it is decidable as fast as emptiness for multicounter automata.

Next we show that the satisfiability problem for $\esotwo$ on finite $(\suc_1, \suc_2, <_2)$-structures reduces to the non-emptiness problem for LODAs.
\begin{lemma}\label{lemma:twosucc}
  For every constraint problem $C$ there is a LODA $\calA$ such that
  $C$ has a finite $(\suc_1, \suc_2, <_2)$-solution if and only if $L(\calA)$ is non-empty.
\end{lemma}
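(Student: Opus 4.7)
The plan is to follow the automata-based proof of Lemma~\ref{lemma:succorder} but to split the verification work between the string transducer $\calB$ and the finite state automaton $\calC$ of a LODA $\calA = (\calB, \calC)$, exploiting the fact that $\calB$ sees the input in the $\suc_1$-order while $\calC$ sees the resorting in the $<_2$-order. I would view a finite $(\suc_1, \suc_2, <_2)$-solution of $C$ as a linearly ordered data word whose positional order is $\suc_1$ and whose data-value order is $<_2$, with $\suc_2$ the induced data successor.

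First, $\calB$ reads the marked string projection and guesses, for each position, its unary type $\sigma$, labels $(d, \gamma, \tau) \in E$ for each existential constraint $(\sigma, E)$ that applies to it (as in step~(E) of Lemma~\ref{lemma:succorder}), and the binary $T$-types of the two $\suc_1$-adjacent neighbours. Using these guesses, $\calB$ performs all local checks from Lemma~\ref{lemma:succorder} for the $\suc_1$-dimension: consistency of the guessed binary types (L2), fulfilment of local witnesses whose required direction $d$ involves $\suc_1$ (L3), and universal constraints for $\suc_1$-adjacent pairs (L4). The markings additionally allow $\calB$ to perform the analogous checks for witnesses of direction $\suc_2$ whenever the intended witness is word-adjacent, since the $\pm 1$ entries of the marking alphabet identify precisely those positions that are simultaneously $\suc_1$-adjacent and $\suc_2$-adjacent.

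Second, $\calB$ outputs each position together with the labels it has guessed, producing a linearly ordered data word $w'$ over an enriched alphabet. The automaton $\calC$ then runs on the resorting of $w'$, in which positional adjacency is $\suc_2$ and positional order is $<_2$. In this resorted view $\calC$ is essentially in the setting of Lemma~\ref{lemma:succorder} for the orders $<_2$ and $\suc_2$, so it verifies the binary $T$-types guessed for $\suc_2$-adjacent pairs via (L2)-(L4), verifies condition (U) stating that for every pair of unary types some binary $T$-type compatible with the universal constraints is available for $<_2$-remote pairs, and executes the rich/poor case analysis (E1)-(E3) to assign non-local $<_2$-witnesses. The labels carried over from $\calB$ tell $\calC$ exactly which existential obligations still need to be discharged in the $<_2$-direction and which have already been discharged locally in the $\suc_1$-dimension.

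The main obstacle is synchronising $\calB$ and $\calC$, since both may need to speak about the binary type of a single pair when the pair is simultaneously $\suc_1$-adjacent and either $\suc_2$-adjacent or $<_2$-remote. I would resolve this by letting $\calB$ commit, in the output alphabet, to the full binary $T$-type of every $\suc_1$-adjacent pair and to a label $(\sigma, d, \gamma, \tau)$ for every outstanding existential witness, so that $\calC$ only needs to check that the $\suc_2$-neighbour or the Otto-style witness it selects via (A1)-(A2) or (A'1)-(A'2) carries matching labels. Once this bookkeeping is in place, the correctness argument is an almost verbatim reuse of that of Lemma~\ref{lemma:succorder}, applied in parallel in the $\suc_1$- and $<_2$-dimensions, and the decidability (and complexity) claim follows by combining with Theorem~\ref{theorem:loda:emptiness}.
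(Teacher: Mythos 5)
Your overall plan coincides with the paper's: reuse the construction of Lemma~\ref{lemma:succorder}, let $\calB$ handle the $\suc_1$-local checks on the marked string projection and let $\calC$ handle the $\suc_2$-local and $<_2$-remote checks on the resorting, with the labels produced by $\calB$ carrying the outstanding existential obligations. The $\suc_1$-and-$\suc_2$-adjacent overlap is also handled as in the paper (the $\pm 1$ markings identify exactly these pairs, and one enforces $\gamma_1=\gamma_3$ there).

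However, there is a genuine gap in how you treat pairs that are $<_2$-remote but $\suc_1$-adjacent, which is precisely the extra difficulty of this lemma. The relevant remote order type is $d \df \neg\suc_1(x,y)\wedge\neg\suc_1(y,x)\wedge x<_2 y\wedge\neg\suc_2(x,y)$, so a witness assigned by $\calC$ in the $<_2$-dimension must in addition be $\suc_1$-remote from the position it serves, and must not collide with a binary type that $\calB$ has already committed for a $\suc_1$-adjacent pair. Your synchronisation device (``$\calC$ checks that the Otto-style witness it selects carries matching labels'') cannot enforce this: $\calC$ only sees the resorted word, and the labels output by $\calB$ describe a position's own $\suc_1$-neighbourhood but do not identify \emph{which} position of the resorting is its $\suc_1$-neighbour, so $\calC$ can neither avoid picking that neighbour as a ``remote'' witness nor detect the resulting clash. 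Concretely, if a $\sigma$-position $a$ carries $m$ obligations $(\sigma,d,\gamma_1,\tau),\dots,(\sigma,d,\gamma_m,\tau)$ and there are exactly $m$ $\tau$-positions above $a$ and not $\suc_2$-adjacent to it, two of which happen to be $\suc_1$-adjacent to $a$, your automaton accepts although only $m-2$ admissible witnesses exist, so soundness fails. The paper closes this by a counting correction: $\calB$ annotates $a$ with the number $i\in\{0,1,2\}$ of $\tau$-labelled positions that are $\suc_1$-close to $a$ but $\suc_2$-remote, $\calC$ then demands $m+i$ candidates above $a$, the richness threshold is enlarged to $k=3(|\Gamma|+5)$, and the Otto assignment picks witnesses from $B_i\setminus N(a)$; the same $\calB$/$\calC$ interplay is needed for the universal-constraint check (your condition (U)), via the verified sets $L_{<_2}(a)$ and $L_{>_2}(a)$. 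Adding this counting mechanism (or an equivalent one) is necessary to make your construction correct.
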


  \begin{proofsketch}
    Assume that $C \df (C_\exists, C_\forall)$ is a constraint problem over a signature $T \cup \{\suc_1, \suc_2, <_2\}$ and let $\Sigma \df \Sigma_{T}$ and $\Gamma \df \Gamma_{T}$. As in the proof of Lemma~\ref{lemma:succordfinitemodel} we assume that non of the possible witnesses requested by an existential constraint contradicts a universal constraint. 
    
    We construct a LODA $\calA = (\calB, \calC)$ such that $\calA$ accepts an ordered data word over $\Sigma$ if and only if $C$ has a finite $(\suc_1, \suc_2, <_2)$-solution. Intuitively the automaton $\calA$ interprets linearly ordered data words as extensions of ordered $(\suc_1, \suc_2, <_2)$-structures by unary relations but with no binary relations. In order to accept a word,  it has to verify that binary $T$-types can be assigned to all pairs of positions in a way consistent with $C$.

    The LODA $\calA$ is very similar to the finite state automaton constructed in Lemma \ref{lemma:succorder}. Like the automaton in Lemma \ref{lemma:succorder}, the LODA~$\calA$ assumes that every position is labeled with its required witnesses~(E). 
    
        \begin{enumerate}
          \item[(E)] We assume that for every existential constraint $(\sigma, E)$, all $\sigma$-labeled positions $a$ are labeled with a fresh label $(\sigma, d, \gamma, \tau)$ such that $(d, \gamma, \tau) \in E$. The intention is that the $(\sigma, E)$-witness of $a$ satisfies $(d, \gamma, \tau)$.
        \end{enumerate}
    
    Dealing with positions that are close to each other can be done in a similar same way as in Lemma \ref{lemma:succorder}, except that now \emph{close} means either close with respect to $\suc_1$ or close with respect to $\suc_2$ (or both). The existence of an assignment of binary $T$-types for positions that are close with respect to $\suc_1$ is verified by $\calB$, and it is verified by $\calC$ for positions close with respect to $\suc_2$. 
    
        More precisely, the automaton guesses the types of pairs of elements that are $\suc_1$- or $\suc_2$-close to each other.  Each element $a$ has at most four elements that are close to it: there might be an element  $b_1$ with $\suc_1(a, b_1)$, an element $b_2$ with $\suc_1(b_2, a)$ and similarly elements $b_3$  with $\suc_2(a, b_3)$ and $b_4$  with $\suc_2(b_4, a)$. Some of those elements might not exist (if $a$ is the first or last element with respect to $<_1$ or $<_2$) or might coincide (e.g. $b_1$ might coincide with either $b_3$ or $b_4$). For each of the elements $b_1, \ldots, b_4$ the automaton can guess and verify the binary $T$-type.
    
    \begin{enumerate}
      \item[(L1)] (Local types) We assume that each element $a$ is labeled by up to four labels $\gamma_1, \ldots, \gamma_4 \in \Gamma$. The intention is that $\gamma_i$ is the binary type of $(a,b_i)$ (if the element $b_i$ exists).
      \item[(L2)] (Consistency of local types) The automaton verifies that the labels are consistent, that is, e.g., that if $a$ and $b$ are elements with $\suc_1(a,b)$ then the label $\gamma_1$ of $a$ (i.e. the type guessed for $(a, b)$) is compatible with the label $\gamma_2$ of $b$ (i.e. the type guessed for $(b, a)$) and that if $\suc_1(a,b)$ and $\suc_2(a,b)$ then $\gamma_1 = \gamma_3$.
      \item[(L3)] (Local witnesses) For every $\sigma$-labeled element $a$ that is labeled by $(\sigma, d, \gamma, \tau)$ due to (E), the automaton verifies that if $d = \suc_1(x,y)$ then the label $\gamma_1$ of $a$ is $\gamma$ and that its successor is labeled with $\tau$. Likewise for $\suc_1(y,x), \suc_2(x,y)$, and $\suc_2(y,x)$).
      \item[(L4)] (Local universal constraints) For all $\sigma$- and $\tau$-labeled positions $a$ and $b$ with $\suc_1(a,b)$ the automaton verifies that if $a$ is labeled $\gamma_1$ then there is no universal constraint $(\sigma, \suc(x,y), \gamma_1, \tau)$. Likewise for $\suc_1(b,a)$,  $\suc_2(a,b)$, and  $\suc_2(b,a)$. 
    \end{enumerate}

    For verifying the existence of an assignment of binary types for positions that are remote from each other, the LODA $\calA$ has to do slightly more than the finite state automaton from Lemma \ref{lemma:succorder}. There is, essentially, only one $(\suc_1, \suc_2, <_2)$-type of remote positions which we will denote by~$d \df \neg \suc_1(x, y) \wedge \neg \suc_1(y, x) \wedge x <_2 y \wedge  \neg \suc_2(x,y)$ (see Figure \ref{figure:theorem:binary}).
    
      \begin{figure}[t]
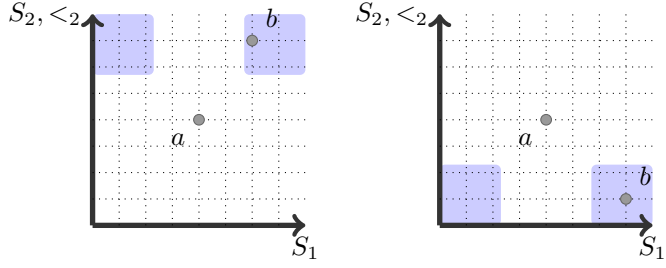
 
        \begin{center}\pictureTwoSuccessorsFarAwayA \pictureTwoSuccessorsFarAwayB \end{center}
        \caption{Illustration of remote elements in Lemma \ref{lemma:twosucc}. On the left, the element $a$ and all potential elements $b$ in the highlighted area satisfy $d(x,y) \df \neg \suc_1(x, y) \wedge \neg \suc_1(y, x) \wedge x <_2 y \wedge  \neg \suc_2(x,y)$. On the right, the symmetric constraint $\neg \suc_1(x, y) \wedge \neg \suc_1(y, x) \wedge y <_2 x \wedge  \neg \suc_2(y,x)$ is illustrated.  \label{figure:theorem:binary}}
      \end{figure}
    
    Checking that for all positions $a$ and $b$ at least one binary $T$-type is consistent with the universal constraints can be done as follows:
    \begin{enumerate}
    \item[(U1)] Assume that each position $a$ is labeled by a set $L_{<_2}(a)$ containing all $\tau \in \Sigma$ such that there is a $\tau$-labeled position $b_1$ such that $(a,b_1)$ satisfies $d$ and by a set $L_{>_2}(a)$ containing all $\tau \in \Sigma$ such that there is a $\tau$-labeled position $b_2$ such that $(b_2, a)$ satisfies $d$. Those labels can be easily verified by $\calB$ and~$\calC$. (E.g. if $\tau \in L_{<_2}(a)$ for a position $a$ and there are $i \in \{0,1,2\}$ many $\tau$-labeled $\suc_1$-close elements $b$ such that $(a,b)$ satisfies $x <_2 y \wedge \neg \suc_2(x,y)$ then $\calC$ verifies that there are $i+1$ $\tau$-labeled elements $b'$ such that $(a, b')$ satisfies $x <_2 y \wedge \neg \suc_2(x,y)$. Then $(a, b')$ satisfies $d$ for one of those $b'$. Similarly for~$\tau \notin L_{<_2}(a)$ and $L_{>_2}(a)$.)
    \item[(U2)] For all $\sigma$-labeled $a$ the automaton verifies that for every $\tau \in L_{<_2}(a)$ there is a $\gamma$ such that there is no universal constraint $(\sigma, d, \gamma, \tau)$ in $C$. Similarly for $L_{>_2}(a)$.
    \end{enumerate}

    Testing that existential witnesses can be assigned to far-away positions is analogous to Lemma \ref{lemma:succorder}. The intuition is that pairs $(a,b)$ of elements satisfying $d$ behave like pairs $(a', b')$ satisfying $x < y \wedge \neg \suc(x, y)$ in  Lemma \ref{lemma:succorder}. A little care is needed for ensuring that elements are indeed remote from each other.
    
    We shortly discuss some details of how $\calA$ verifies that binary $T$-types for all $\sigma$- and $\tau$-labeled elements $a$ and $b$ satisfying $d$ can be assigned. To this end, the automaton $\calA$ guesses whether there is an element $u$ such that $\sigma$ occurs more than $k \df 3(|\Gamma|+5)$ times above $u$ and $\tau$ occurs more than $k$ times below $u$. (Recall that a position $a$ is above $u$ if~\mbox{$a >_2 u$}.)

    If such a position $u$ exists then the automaton tests that every $\sigma$-labeled element $a$ has as many $\tau$-labeled elements $b$ such that $(a,b)$ satisfies $d$ as is required by (E), and that every $\tau$-labeled element $a'$ has sufficiently many $\sigma$-labeled positions $b'$ such that $(a', b')$ satisfies $\bar d$. If this is the case then binary types can be assigned as in Lemma \ref{lemma:succorder} using Otto's assignment technique.
  
    For testing that there are, say $m$ many, $\tau$-labeled elements $b$ such that $(a,b)$ satisfies $d$ the automaton $\calB$ labels $a$ with the number $i \in \{0,1,2\}$ of $\tau$-labeled positions that are $\suc_1$-close to $a$ and $\suc_2$-remote from~$a$. The automaton $\calC$ then tests that there are at least $m+i$ many $\tau$-labeled elements $b$ such that $(a,b)$ satisfies $x <_2 y \wedge \suc_2(x,y)$.
  
    If there is no such position $u$ then there is a position $v$ such that there are at most $k+1$ $\sigma$-labeled positions below $v$ and at most $k+1$ $\tau$-labeled positions above $v$ by Lemma \ref{lemma:richpoor}. The automaton $\calA$ exploits this structure exactly as in Lemma \ref{lemma:succorder} by labeling those up to $k+1$ many positions distinctly and guessing and verifying their witnesses.

      For completeness we describe $\calA$ in more detail, even though it is very similar to the automaton in Lemma \ref{lemma:succorder}. The automaton $\calA$ does the following: 
      \begin{enumerate}
      \item[(E1)] It guesses whether there is such a position $u$. The correctness of the guess can be easily verified by $\calC$.
      \item[(E2)] If there is such a position $u$ then $\calA$ verifies the following for all $\sigma, \tau \in \Sigma$:
        \begin{enumerate}
        \item If a $\sigma$-labeled position $a$ is labeled by $(\sigma, d, \gamma_1, \tau), \ldots, (\sigma, d, \gamma_m, \tau)$ then there are $m$ $\tau$-labeled positions above $a$ that are neither $\suc_1$- nor $\suc_2$-close to $a$.
        \item Symmetrically, if a $\tau$-labeled position $a$ is labeled by $(\tau, \bar d, \gamma_1, \sigma)$, $\ldots$, $(\tau, d, \gamma_m, \sigma)$ then there are $m$ $\sigma$-labeled positions below $a$ that are neither $\suc_1$- nor $\suc_2$-close to $a$.
        \end{enumerate}
      \item[(E3)] If there is no such position $u$ then $\calA$ guesses the position $v$. The correctness of the guess can be easily verified by $\calC$.      Then: 
        \begin{enumerate}
          \item We assume that positions are labeled by the following extra information (using fresh labels depending on $\sigma$ and $\tau$):
          \begin{itemize}
            \item The $i$th $\sigma$-labeled position $a$ below $v$ is labeled by $\sigma_i$. The $i$th $\tau$-labeled position $a$ above $v$ is labeled by $\tau_i$. (By the choice of $v$ there are at most $k+1$ such $\sigma_i$ and $\tau_i$.)
            \item The intended witnesses for $\sigma_i$- and $\tau_i$-labeled positions are labeled as follows:
              \begin{itemize}
              \item If the $\sigma_i$-labeled position $a$ is labeled by $(\sigma, \gamma, d, \tau)$ in (E), then there is a $\tau$-labeled position $b$ labeled with $(\sigma_i, \gamma, d, \tau)$ such that $(a,b)$ satisfies $d$. 
              \item Likewise, if the $\tau_i$-labeled position $a$ is labeled by $(\tau, \gamma, \bar d, \sigma)$ in (E), then there is a $\sigma$-labeled position $b$ that is labeled with $(\tau_i, \gamma, \bar d, \sigma)$ such that $(a, b)$ satisfies $\bar d$ . 
              \end{itemize}
            \item Positions with intended witnesses from $\sigma_i$- and $\tau_i$-labeled positions are labeled as follows:
              \begin{itemize}
                \item  Each position $a$ below $v$ that is labeled by $(\tau, \gamma, \bar d, \sigma)$ in (E) is also labeled with $(\tau, \gamma, \bar d, \sigma_i)$ for some $i$ such that $(a,b)$ satisfies $\bar d(x,y)$ where $b$ is the $\sigma_i$-labeled position.
                \item  Likewise, each $(\sigma, \gamma, d, \tau)$-labeled position $a$ above $u$ is labeled with $(\sigma, \gamma, d, \tau_i)$ for some $i$ such that $(a, b)$ satisfies $d$ where $b$ is the $\sigma_i$-labeled position. 
              \end{itemize}
          \end{itemize}
          \item The automaton verifies that the labels are consistent, that is:
            \begin{itemize}
              \item No $\tau$-labeled position below $v$ is labeled with $(\sigma_i, \gamma, d, \tau)$ and with $(\tau, \gamma', \bar d, \sigma_i)$ where $\gamma$ and $\gamma'$ are not reverse types.
              \item Likewise, no $\sigma$-labeled position above $v$ is labeled with $(\tau_i, \bar \gamma', d, \sigma)$ and with $(\sigma, \gamma, d, \tau_i)$ where $\gamma$ and $\gamma'$ are not reverse types.
            \end{itemize}
        \end{enumerate}
    \end{enumerate}

    The correctness is proved completely analogous to Lemma \ref{lemma:succorder}.
                                                                                                                                                                                                              \end{proofsketch}

\section{Order- and Successor-invariance}\label{section:invariance}
In this Section we discuss order-invariance for two-variable formulas. A first-order sentence $\varphi$ over a signature $T \cup \{<\}$ is \emph{order-invariant} (\emph{$<$-invariant} for short) if for each $T$-structure $\frakA$ and all linear orders $<_{1}$ and $<_{2}$ on the domain of $\calA$,
$$ (\calA,<_{1}) \models \varphi \ \iff \ (\calA, <_{2}) \models \varphi. $$
A class of finite $T$-structures $\calA$ is \emph{$<$-invariantly first-order definable} if there is an $<$-invariant $T\cup\{<\}$-sentence $\varphi$ such that, for each finite $T$-structure $\frakA$, $$\text{$\frakA \in \calC$ if and only $(\frakA,<) \models \varphi$}$$ for each linear order $<$ on the domain of $\frakA$.

It is not immediately obvious that allowing this restricted use of an order extends the expressive power of $\fo$.
A well-known example due to Gurevich (cf. eg. \cite{Libkin2004}) shows that there is indeed a $<$-invariantly definable class of structures which is not $\fo$-definable without an order. The example of Gurevich indicates that order-invariance is, at least potentially, useful in formulating queries. For using invariance in this context, it is essential to be able to verify that an $\fo$-sentence is indeed invariant. Unfortunately, a simple reduction of the finite satisfiability problem shows that $<$-invariance of $\fo$-sentences is undecidable.

The results from the previous sections as well as the discussion in the introduction imply that $<$-invariance of $\fotwo$-sentences is decidable.
\begin{theorem}\label{thm:inv-decidable}
  Order-invariance of $\fotwo$ is in $\TWONEXPTIME$.
\end{theorem}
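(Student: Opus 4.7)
The plan is to reduce the question to the decidability result already established in Theorem~\ref{thm:twoorders-decidable}, using the observation highlighted in the introduction that converts $<$-invariance into a satisfiability statement involving two linear orders.

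First, given an $\fotwo$-sentence $\varphi$ over signature $T \cup \{<\}$, I would produce the $\esotwo$-sentence
\[
\psi \;=\; \exists \bar R\,\bigl(\varphi_1(\bar R,<_1) \wedge \neg \varphi_2(\bar R,<_2)\bigr),
\]
where $\bar R$ ranges over the (at most binary) relation symbols of $T$ and $\varphi_i$ denotes the $\fotwo$-sentence obtained from $\varphi$ by substituting $<$ with $<_i$. Since the conjunction $\varphi_1 \wedge \neg \varphi_2$ is a Boolean combination of $\fotwo$-sentences over the signature $T \cup \{<_1,<_2\}$, it is itself an $\fotwo$-sentence, so $\psi$ is an $\esotwo$-sentence. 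Moreover, $|\psi| = O(|\varphi|)$ and $\psi$ is computable in polynomial time from $\varphi$.

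Second, I would invoke the introduction's observation verbatim: $\varphi$ fails to be order-invariant if and only if there exist a finite $T$-structure and two linear orders $<_1,<_2$ on its domain witnessing the difference, which is exactly to say that $\psi$ is satisfiable on the class of finite ordered $(<_1,<_2)$-structures. Since $\psi$ does not mention $\suc_1$, it is satisfiable on finite ordered $(<_1,<_2)$-structures iff it is satisfiable on finite ordered $(\suc_1,<_1,<_2)$-structures: the induced successor $\suc_1$ is uniquely determined by $<_1$, so the two classes are in canonical bijection over the relevant vocabulary. Thus the non-invariance of $\varphi$ reduces to an instance of the satisfiability problem addressed by Theorem~\ref{thm:twoorders-decidable}.

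Third, applying Theorem~\ref{thm:twoorders-decidable} to $\psi$, non-invariance of $\varphi$ is decidable in $\NTWOEXPTIME$, and hence order-invariance itself is decidable within the corresponding deterministic or co-nondeterministic double-exponential bound claimed in the statement. There is no genuine technical obstacle here: the entire content of the theorem is the syntactic translation plus the two-order decidability result, and the only bookkeeping to track is that the $\esotwo$-sentence $\psi$ has size linear in $|\varphi|$, so the $\NTWOEXPTIME$ bound of Theorem~\ref{thm:twoorders-decidable} transfers directly. The main work was already done in Sections~\ref{section:twoorders} and in Lemma~\ref{lemma:twoordfinitemodel}, of which this theorem is essentially an immediate corollary.
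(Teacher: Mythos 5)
Your reduction is exactly the paper's intended argument: translate non-invariance of $\varphi$ into satisfiability of $\exists \bar R\,(\varphi_1(\bar R,<_1)\wedge\neg\varphi_2(\bar R,<_2))$ on finite ordered $(<_1,<_2)$-structures, note that these expand uniquely to ordered $(\suc_1,<_1,<_2)$-structures, and apply Theorem~\ref{thm:twoorders-decidable}. Your explicit remark that this nondeterministic bound really decides the \emph{complement} (so invariance sits in the co-class) is in fact slightly more careful than the paper's own phrasing, but the approach is the same.
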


\emph{Successor-invariance} (\emph{$\suc$-invariance}, for short) is defined analogously to order-invariance where instead of the linear order the formulas may use a successor relation (see e.g. \cite{Rossman2003}). By combining the approach discussed in the introduction with Theorem 2 of \cite{CharatonikW13}, one obtains the following result.
\begin{theorem}
  Successor-invariance of $\fotwo$ is in $\NEXPTIME$.
\end{theorem}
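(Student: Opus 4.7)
The plan is to apply the same reduction principle that the introduction uses for order-invariance, but with the successor symbol in place of the order symbol. Unfolding the definition, a $\fotwo$-sentence $\varphi$ over signature $T \cup \{\suc\}$ fails to be successor-invariant if and only if there exist a finite domain $D$, an interpretation $\bar R$ of the symbols in $T$, and two successor relations $\suc_1, \suc_2$ on $D$ (each induced by some linear order) such that $(D, \bar R, \suc_1) \models \varphi$ but $(D, \bar R, \suc_2) \not\models \varphi$. So the first step is to record this equivalence explicitly and observe that the successor relations range over exactly the relations allowed in ordered $(\suc_1, \suc_2)$-structures as defined in Section \ref{sec:ordered-structures}.

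The second step is to package this condition as a single $\esotwo$-satisfiability problem. Writing $\varphi^i$ for the $\fotwo$-sentence obtained from $\varphi$ by renaming the successor symbol to $\suc_i$, the conjunction $\varphi^1 \wedge \neg \varphi^2$ is again a $\fotwo$-sentence (the two-variable fragment is closed under Boolean operations), and its size is linear in $|\varphi|$. Existentially quantifying over the relation symbols in $\bar R$ yields the $\esotwo$-sentence $\psi = \exists \bar R\,(\varphi^1 \wedge \neg \varphi^2)$. By the equivalence of the previous paragraph, $\varphi$ is not $\suc$-invariant if and only if $\psi$ has a model in the class of finite ordered $(\suc_1, \suc_2)$-structures.

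The third and final step is to invoke Theorem 2 of \cite{CharatonikW13}, according to which $\esotwo$-satisfiability on finite ordered $(\suc_1, \suc_2)$-structures is in $\NEXPTIME$. Since $|\psi|$ is linear in $|\varphi|$, the non-successor-invariance check can be carried out in non-deterministic exponential time in $|\varphi|$.

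There is essentially no technical obstacle in this plan; all the difficulty is absorbed by the cited satisfiability result. The one subtlety worth flagging is that the reduction directly produces a non-deterministic algorithm for \emph{non}-invariance, so what is actually obtained is a $\coNEXPTIME$ upper bound for the invariance decision problem itself; as in the $\TWONEXPTIME$ statement for order-invariance, we follow the paper's convention of phrasing the bound in terms of the complementary existential search and stating it as $\NEXPTIME$.
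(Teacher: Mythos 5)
Your proposal is correct and follows essentially the same route as the paper: reduce non-$\suc$-invariance of a $\fotwo$-sentence to $\esotwo$-satisfiability of $\exists \bar R\,(\varphi^1 \wedge \neg\varphi^2)$ on finite ordered $(\suc_1,\suc_2)$-structures and invoke Theorem 2 of \cite{CharatonikW13}. Your remark about the bound really being a $\NEXPTIME$ test for non-invariance (hence co-nondeterministic for invariance itself) matches the convention the paper uses in stating these results.
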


Our approach for deciding $<$-invariance and $\suc$-invariance does not immediately transfer to $(\suc, <)$-invariance
where both the order and its induced successor can be used in formulas, since even $\emsotwo$ is undecidable on finite $(\suc_1, <_1, \suc_2, <_2)$-structures \cite{Manuel10}
However, in the proof of Theorem~\ref{thm:inv-decidable}, it would suffice to prove decidability for formulas of the form $\varphi(\suc_{1}, <_{1}) \wedge \lnot\varphi(\suc_{2}, <_{2})$.
This seems conceivable, but so far we were unable to obtain a proof using our techniques.

The following example shows that invariance is useful even for two-variable logic.
\begin{example}
$\fotwo$ is often extended by \emph{counting quantifiers} of the shape $\exists^{\geq k}$, where $\exists^{\geq k} x\, \varphi$ states that there are at least $k$ satisfying assignments to $x$. We define \emph{monadic counting quantifiers} which are defined in the same way but $\varphi$ is restricted to use at most one free variable $x$. 
We observe that these quantifiers are $<$-invariantly definable by two-variable formulas, since
\begin{align*}
  \exists^{\geq 0} x\, \varphi(x) \ &\equiv \ x= x,\\
  \exists^{\geq k+1} x\, \varphi(x) \ &\equiv \  \exists x\, \varphi(x) \wedge \exists^{\geq k} y \ y < x \wedge \varphi(y).
\end{align*}
It is easy to see using a pebble game argument that the monadic counting quantifier $\exists^{\leq k}$ is not two-variable definable for each~\mbox{$k \geq 3$}.
\end{example}

\section{Conclusion and Future Work}\label{section:conclusion}
We have shown that $<$-invariance and $\suc$-invariance of $\fotwo$-sentences is decidable by establishing decidability of $\esotwo$-satisfiability on finite $(\suc_1, <_1, <_2)$- and $(\suc_1, \suc_2, <_2)$-structures. Several interesting questions remain open:

\begin{enumerate}
 \item Is $\esotwo$ decidable on $(\suc_1, \suc_2, \suc_3)$-structures? 
 \item Where is the border of decidability of $\esotwo$ on general, not necessarily finite, ordered structures?
 \item Is $(\suc, <)$-invariance of $\fotwo$ decidable?
 \item Is every $<$-invariantly $\fotwo$-definable property also $\fo$-definable without the use of a linear order?
\end{enumerate}
For the second question preliminary results have been obtained for certain order types.

\section*{Acknowledgements}

We thank Thomas Schwentick for stimulating discussions and many very helpful suggestions for improving upon a draft of this article. The first author acknowledges the financial support by DFG grant \mbox{SCHW 678/6-1}.

\bibliography{bibliography}

\end{document}